\newtheorem{Definition}{Definition}
\newtheorem{Theorem}{Theorem}
\newtheorem{Remark}{Remark}
\newtcolorbox{boxD}{
    arc = 0mm,
    boxrule = 0pt, 
    toprule = 0pt, 
    bottomrule = 0pt, 
    colframe = black, 
    colback = white,
    float, floatplacement=t
}
\def\change#1{{\textcolor{blue}{#1}}}
\begin{document}

\title{Safe-by-Construction Autonomous Vehicle Overtaking using Control Barrier Functions and Model Predictive Control}

\author{Dingran Yuan, Xinyi Yu, Shaoyuan Li, Xiang Yin
\thanks{This work was supported by the National Key Research and Development Program of China (2018AAA0101700) and the National Natural Science Foundation of China (62061136004, 61803259, 61833012). }
\thanks{The authors are with the Department of Automation, Shanghai Jiao Tong University, Shanghai, 200240 China (e-mail: dingran.yuan@foxmail.com; yuxinyi-12@sjtu.edu.cn; syli@sjtu.edu.cn; yinxiang@sjtu.edu.cn). }}

\markboth{}{}

\IEEEpubid{}

\maketitle

\begin{abstract}
Ensuring safety for vehicle overtaking systems is one of the most fundamental and challenging tasks in autonomous driving. This task is particularly intricate  when the vehicle must not only overtake its front vehicle safely but also  consider the  presence of potential opposing vehicles in the opposite lane that it will temporarily occupy. 
In order to tackle the overtaking task in such challenging scenarios, we introduce a novel integrated framework tailored for vehicle overtaking maneuvers. 
Our approach integrates the theories of varying-level control barrier functions (CBF) and time-optimal model predictive control (MPC).  
The main feature of our proposed overtaking strategy is that it is safe-by-construction, which enables rigorous mathematical proof and validation of the safety guarantees. 
We show that the proposed framework is applicable when the opposing vehicle is either fully autonomous or driven by human drivers. To demonstrate our framework, we perform a set of simulations for overtaking scenarios under different settings. The simulation results show the superiority of our framework in the sense that it ensures collision-free and achieves better safety performance compared with  the standard MPC-based approach without safety guarantees.
\end{abstract}

\begin{IEEEkeywords}
Overtaking maneuvers, Formal Methods, Control Barrier Functions, Autonomous Driving
\end{IEEEkeywords}

\section{Introduction}

\subsection{Motivation}

\IEEEPARstart{O}{ver} the past decade, smart vehicle control systems have garnered significant attention due to advancements in real-time  computing and advanced control theory. The benefits of implementing smart controls have been widely acknowledged, including improved traffic arrangements, reduced accidents, and enhanced driving comfort  \cite{autiliCooperativeIntelligentTransport2021,xuDecentralizedTimeEnergyoptimal2022,wangAdaptiveDynamicPath2023,zhouSpatiotemporalFeatureEncoding2022}. Specifically, high-level smart vehicle control can generally be classified into four main categories: adaptive cruise control (ACC), lane keeping (LK), lane change (LC), and \emph{overtaking}. Of these categories, overtaking maneuvers pose the particular challenge as they typically need to interact with other vehicles.
 
One of the most challenging scenarios in vehicle overtaking maneuvers is probably  the two-way overtaking scenario, as depicted in Figure~\ref{fig.overtaking_problem}. In this scenario, the vehicle must not only overtake its front vehicle safely but also consider the potential presence of the opposing vehicles in the opposite lane that it will temporarily occupy. According to a study conducted by the German Insurers Accident Research (UDV), the accident rate (measured in accidents per $10^6$ vehicles km) of all types of vehicles during overtaking on a two-way lane is approximately 50\% higher than that on a one-way lane without opposing traffic \cite{richterCausesConsequencesCountermeasures2017}. This further underscores the importance and difficulty of addressing the two-way road overtaking problem.

As  human-centered   systems, autonomous vehicles are \emph{safety-critical}, which necessitates a high level of safety and robustness. As such, there is an increasing demand for \emph{formal safety guarantees} in the analysis and design of autonomous driving strategies. Particularly, instead of relying solely on experimental testing to assert safety, there is a need to provide \emph{provably safe} evidence in a correct-by-construction manner. This is particularly relevant in the context of autonomous overtaking systems, where existing studies have utilized either model-based methods \cite{usmanAutonomousVehicleOvertaking2009,petrovModelingNonlinearAdaptive2014,chaeVirtualTargetBasedOvertaking2020} or data-driven methods \cite{milanesIntelligentAutomaticOvertaking2012, liDeepReinforcementLearning2020, guptaNovelGraphbasedMachine2022, liCombiningDecisionMaking2022} to design vehicle overtaking maneuvers. 
However, most existing methods can hardly provide rigorous safety guarantees when operating in dynamic environments with changing surroundings. Therefore, there is a pressing need for research in this field to develop methods that can guarantee safety in such scenarios.

\IEEEpubidadjcol

\subsection{Our Contributions}

\begin{figure} [!ht]
    \centering
    \includegraphics[width = 0.9\linewidth]{./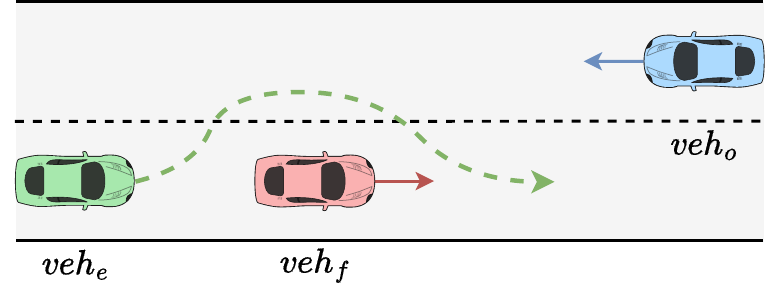}
    \caption{A two-way road overtaking  scenario with possible opposing traffic.}
    \label{fig.overtaking_problem}
\end{figure}

Our work aims to address the gap between the present non-verifiable and error-prone control designs for overtaking control systems in autonomous driving and the increasing needs for safety guarantees. Specifically, we consider a challenging two-way overtaking scenario that includes possible opposite vehicles, as illustrated in Figure~\ref{fig.overtaking_problem}. To tackle this problem, we develop a novel framework that integrates the theories of varying-level control barrier functions (CBF) and time-optimal model predictive control (MPC). The main feature of our proposed overtaking strategy is that it is \emph{safe-by-construction}, meaning that we can mathematically prove its safety guarantees. This approach eliminates the need for the possible tedious subsequential test-and-redesign cycle that often arises when formal methods are not used.

More specifically, the main contributions of this paper are summarized as follows:
\begin{itemize}[leftmargin=*] 
    \item 
    First, we propose a novel notion, referred to as varying-level control barrier functions (VL-CBF), which offers adjustable level sets, in contrast to standard CBFs that rely on fixed level sets. The proposed VL-CBFs are particularly suitable for autonomous driving applications, as they enable the incorporation of the driver's preferences and facilitate the online adaptation of relevant parameters.
    \item 
    Then we provide an integrated framework that combines the proposed VL-CBF  with model predictive control for vehicle overtaking in a more complex scenario that includes possible opposing traffic. Specifically, at each instant, our approach solves two optimization problems: one for overtaking the front vehicle and the other for returning to the original position when the former problem cannot be solved.   
    \item 
    We show that our proposed framework guarantees safety, and we formally establish this by mathematically proving that at each instant, either the forward or the backward problem has a solution. Also, we show that any planned trajectories and associated control laws will prevent the ego vehicle from colliding with either the front or opposing vehicle.  A set of simulations are also conducted to demonstrate our results. 
\end{itemize}

\subsection{Related Works}

\subsubsection{Autonomous Overtaking Maneuvers}
The increasing computational power has led to a rapid development of on-road overtaking controllers in the past years \cite{liCombiningDecisionMaking2022, xieDistributedMotionPlanning2022,dixitTrajectoryPlanningAutonomous2019}. Approaches in the literature for tackling the vehicle overtaking problem can be broadly categorized into three types: traditional model-based methods, data-driven methods, and learning-based methods. 
For example, in \cite{petrovModelingNonlinearAdaptive2014}, Petrov \textit{et al.} introduced a three-phase overtaking model which decomposes the overtaking actions into  consecutive LC, LK and LC sub-problems, then using a kinematic model-based adaptive controller to solve the overall problem. 
This approach works under a simple scenario without surrounding traffic. In \cite{diazalonsoLaneChangeDecisionAid2008,chaeVirtualTargetBasedOvertaking2020}, the authors used extended Kalman filters to estimate the vehicle states in order to switch between LC and LK modes. 

Data-driven approaches were also developed by researchers to synthesize control laws from data collected from expert drivers and simulations  \cite{liNovelUAVenabledData2020, naranjoLaneChangeFuzzyControl2008,milanesIntelligentAutomaticOvertaking2012,wangCollisionAvoidanceSystem2017,doHumanDriversBased2017}. In  \cite{wangCollisionAvoidanceSystem2017},  vehicle trajectories are generated based on the dangerous level computed by a fuzzy inference system developed with naturalistic driving data. 
In \cite{doHumanDriversBased2017}, a two-segment lane change controller  
was proposed  in order to   mimic  the behavior of human-drivers. However, the accuracy of these methods heavily depends on the  data sets, and they cannot provide rigorous guarantees for safety. 
Learning-based planning methods with safety considerations have also been proposed recently in \cite{liuPhysicsawareSafetyassuredDesign2022,liuSafetydrivenInteractivePlanning2023}. For example, in \cite{liuSafetydrivenInteractivePlanning2023}, the authors proposed a neural network-based LC controller that leverages a back-up strategy to enhance safety.
The readers are referred to \cite{chenInterpretableEndtoendUrban2021} for more planning methods within the end-to-end training framework.

Reinforcement learning techniques have also been used to  perform   overtaking  tasks in partially-known or unknown environments \cite{ngaiMultipleGoalReinforcementLearning2011,liangCIRLControllableImitative2018,liDeepReinforcementLearning2020}. 
For example,  Rosolia \textit{et al.} \cite{rosoliaLearningHowAutonomously2020} proposed a deep learning-based race  strategy  by combining with MPC controllers. 
However,  since learning-based methods can hardly cover all possible scenarios that may happen on road due to the nature of their finite training data, they still cannot provide formal safety guarantee\change{s} for overtaking tasks. 


\subsubsection{CBF-Based  Vehicle Control with Formal Guarantees}
In recent years, control barrier functions (CBFs) have emerged as a highly successful technique for ensuring the safety of autonomous systems with formal guarantees \cite{amesControlBarrierFunctions2019}. The fundamental principle of CBFs is to represent the safety of the system using a super-level set, whose forward invariance can be enforced through constraints on the control inputs \cite{amesControlBarrierFunctions2019}. In the pioneering work by Ames \textit{et al.} \cite{amesControlBarrierFunction2017}, quadratic programming (QP) was applied to CBF, which made CBF computationally feasible for real-time control.
To address various challenges in system design, many variants of CBFs have been proposed recently. For example, exponential CBF and high-order CBF have been developed to overcome the non-existence of first-order Lie derivatives \cite{nguyenExponentialControlBarrier2016, xiaoHighOrderControlBarrier2022, tanHighOrderBarrierFunctions2022}. Feasibility and discretization issues have been addressed in the literature, with works such as \cite{agrawalDiscreteControlBarrier2017, xiaoAdaptiveControlBarrier2022, xiongDiscreteTimeControlBarrier2022, choiRobustControlBarrier2021}. Additionally, observation-based CBFs \cite{dawsonLearningSafeGeneralizable2022} and decentralized CBFs \cite{qinLearningSafeMultiAgent2021} have been proposed to extend the generality and scalability for large-scale systems in unknown environments.
Barrier functions have also been utilized to enhance safety  for learning-based  controllers; see, e.g.,\cite{wangJointDifferentiableOptimization2023,xiaoDifferentiableControlBarrier2022,yangDifferentiableSafeController2023}. 
In \cite{wangJointDifferentiableOptimization2023}, the authors proposed a joint differentiable optimization and verification framework that can synthesize controllers and their certificates simultaneously. 
In \cite{xiaoDifferentiableControlBarrier2022,yangDifferentiableSafeController2023}, CBFs are used  as an add-on layer to  neural network-based controllers to ensure safety.

In the specific context of autonomous vehicles, CBFs have been successfully applied to ensure safety for four categories of basic tasks. For instance, Xu \textit{et al.}  designed a composed CBF through a contract-based method that provided safety guarantees for simultaneous LK and ACC \cite{xuCorrectnessGuaranteesComposition2018}. Lyu \textit{et al.}  focused on safe ramp merging cases, proposing a probabilistic CBF control law under Gaussian uncertainty of motion \cite{lyuAdaptiveSafeMerging2022}. Furthermore, in  \cite{lyuProbabilisticSafetyAssuredAdaptive2021}, the authors developed an adaptive CBF learning structure for the ego vehicle based on the observation of others. 
Particularly, for the vehicle overtaking problems,  He \textit{et al.}  proposed a rule-based three-phase overtaking framework using a finite state machine (FSM) to switch between different CBFs during different overtaking phases  \cite{heRuleBasedSafetyCriticalControl2021}. The  authors also proposed a race car overtaking strategy that combines CBF with a learning-based MPC method \cite{heAutonomousRacingMultiple2022}, enabling safe competition with surrounding vehicles on a closed track.
While these works provide basic solutions for formal safety guarantees in overtaking problems, it should be noted that the frameworks are developed only for simple scenarios without possible opposing traffic. For practical purposes, it is necessary to take unexpected opposing vehicles into account to ensure  safety.

\subsubsection{Challenges in  Overtaking with Opposing Vehicles}
\label{sec.Related-works-summary}
As we discussed above, most existing works on autonomous overtaking do not provide formal safety guarantees for the entire process except \cite{heRuleBasedSafetyCriticalControl2021,heAutonomousRacingMultiple2022}. 
However, \cite{heRuleBasedSafetyCriticalControl2021,heAutonomousRacingMultiple2022} consider the overtaking problem \emph{without} opposing vehicles. This scenario is fundamentally easier to handle than the case with opposing vehicles as we consider in this work. Specifically, without opposing vehicles, the completion of the overtaking task can be postponed to an infinite horizon as the ego vehicle can always stay in the adjacent lane. However, in the presence of opposing vehicles, the optimization problem has to be solved within a finite horizon since the opposing vehicle is approaching towards the ego vehicle.  Therefore, we need to provide the estimated time for accomplishing the overtaking task with safety guarantees. Furthermore, the presence of opposing vehicle may make the overtaking task   infeasible, 
and therefore, we also need to provide a back-up plan for safely returning to the original point. All these issues have not been considered in the literature, and our work aims to provide an integrated method to address these challenges.

\subsection{Paper Organizations}
The paper is organized as follows. 
Section~\ref{sec.problem} formulates the two-way road overtaking problem. 
Section~\ref{sec.preliminaries} revisits the theory of control barrier functions, and introduces the model we use in this study. Section~\ref{sec.VL-CBF}  proposes the varying-level control barrier functions and investigates its related properties. Section~\ref{sec.framework} introduces our   framework for two-way road overtaking, and establishes the formal guarantee for   safety. Section~\ref{sec.sims} provides a set of simulations to illustrate the proposed approach, and gives comparison of the baseline method. Finally, Section~\ref{sec.conclusion}  summarizes the paper  and discusses some future directions.

\section{Problem Formulation}
\label{sec.problem}
In this paper, we focus on an overtaking scenario on a two-way road as depicted in Figure~\ref{fig.overtaking_problem}. 
Specifically, the current lane with the same traffic direction is denoted by $\mathcal{L}_{\text{ego}}$  and the opposing lane with the opposite traffic direction is denoted by  $\mathcal{L}_{\text{opp}}$.
We term our controlled vehicle as the \textit{ego vehicle} (shown in green), denoted as $veh_e$; the front overtaken vehicle termed as \textit{overtaken vehicle} (shown in red), denoted as $veh_f$, where $f$ stands for ``front"; and the possible opposing vehicle (shown in blue) is denoted as $veh_o$. 

Our objective is to design an \emph{overtaking control strategy} for the ego vehicle $veh_e$ with \emph{provable safety guarantees}. Specifically, we require that:
\begin{itemize}[leftmargin=*] 
    \item 
    The ego vehicle $veh_e$ can safely overtake the front vehicle $veh_f$ by  temporarily occupying the opposite lane $\mathcal{L}_{\text{opp}}$, while ensuring collision avoidance with both $veh_f$ and the opposing vehicle $veh_o$  if the overtaking task is feasible. 
     \item
     If the overtaking task is not feasible, then ego vehicle  $veh_e$  must be able to return to its original position in  $\mathcal{L}_{\text{ego}}$ safely ensuring collision avoidance with other vehicles.
     \item 
     A formal guarantee can be established for collision avoidance and solution feasibilities throughout the entire overtaking process.
\end{itemize}


\begin{Remark} 
For the sake of simplicity and without loss of generality, in our later developments, we make the following assumptions regarding the behaviors of the front vehicle $veh_f$ and the opposing vehicle $veh_o$. 
\begin{itemize}[leftmargin=*] 
    \item 
     For the front vehicle  $veh_f$, 
     we assume that it is non-accelerating during the overtaking process. 
     This assumption is reasonable for a rational driving scenario.  
     In fact, our approach can be extended to the case,  where $veh_f$  is accelerating or non-accelerating but not interchangeably.
      \item
    For the opposing vehicle $veh_o$,
    our main analysis will be based on the assumption that it is also autonomous following the same safety-critical control law.  This assumption enables us to perform a thorough analysis. When the opposing vehicle is driven by a human driver, we will also discuss using conservative analysis to address any deviations from our assumptions in Section~\ref{sec.framework}. 
\end{itemize}  
\end{Remark}

\section{Preliminaries}
\label{sec.preliminaries}

\subsection{Vehicle Models}

In this work, the general dynamics of vehicles are described by the following nonlinear control affine model:
\begin{equation}
    \dot{\mathbf{x}} = f(\mathbf{x}) + g(\mathbf{x})\mathbf{u},
    \label{eq.control_affine_model}
\end{equation}
where $\mathbf{x}\in\mathcal{X}\subset\mathbb{R}^n$, and $\mathbf{u}\in\mathcal{U}\subset\mathbb{R}^m$ are the system state  and control input, respectively,  with $\mathcal{X}$ and $\mathcal{U}$ be the state and control constraints. Mappings $f$ and $g$ are both assumed to be Lipschitz continuous. For any initial condition $\mathbf{x}(0)\in\mathbb{R}^n$, there exists a maximum time interval $\mathcal{I}_{\text{max}}\coloneqq [0,\tau_{\text{max}})$ such that $\mathbf{x}(t)$ is the unique solution to \eqref{eq.control_affine_model} on $\mathcal{I}_{\text{max}}$.


More specifically, for the ego vehicle,  we consider a precise  non-holonomic kinematic bicycle model   \cite{kongKinematicDynamicVehicle2015}, which can be transferred into the following control affine model   \cite{heRuleBasedSafetyCriticalControl2021}:
\begin{equation}
    \dot{\mathbf{x}} = 
    \left[\begin{array}{c}
        \dot{x}\\
        \dot{y}\\
        \dot{\psi}\\
        \dot{v}  
    \end{array}\right] = \left[\begin{array}{c}
        v\cos\psi\\
        v\sin\psi\\
        0\\
        0
    \end{array}\right] + \left[\begin{array}{cc}
        0 & -v\sin\psi\\
        0 & v\cos\psi\\
        0 & \frac{v}{l_r}\\
        1 & 0
    \end{array}\right] \left[\begin{array}{c}
        \alpha\\
        \beta
    \end{array}\right], 
    \label{eq.kinematic_bicycle}
\end{equation} 
where 
$\mathbf{x} = [x,y,\psi, v]^\top$ denotes the state of the vehicle
and  $\mathbf{u} = [\alpha,\beta]^\top$ denotes the input of the vehicle with $\alpha$ be the acceleration at vehicle's center of gravity (c.g.) and $\beta$ be the slip angle of the vehicle. The system states $x,y$ are the longitudinal and lateral positions of the c.g. with respect to the inertial frame, respectively, 
and $\psi, v$ represent  the orientation and forward velocity of the vehicle, respectively. In the rest of this paper, we will use normal letter and bold letter to represent scalar and vector, respectively.

For the overtaken and opposing vehicles,  lateral movements are negligible during the overtaking process, and only longitudinal behaviors are of our interest. Therefore, we use a  simpler double integral model described as follows:
\begin{equation}
    \dot{\mathbf{x}} = 
    \left[\begin{array}{c}
        \dot{x}\\
        \dot{v}
    \end{array}\right] = \left[\begin{array}{cc}
        0 & 1\\
        0 & 0
    \end{array}\right] \left[\begin{array}{c}
        x\\
        v
    \end{array}\right] + \left[\begin{array}{c}
        0\\
        1
    \end{array}\right]\alpha,
    \label{eq.double_integrator}
\end{equation}
where $\alpha$  represents the along-road longitudinal acceleration rate of the double integral model.


\subsection{Control Barrier Functions}
Control barrier functions (CBFs) are recently developed successful techniques for ensuring safety for  constrained control systems \cite{amesControlBarrierFunction2017, amesControlBarrierFunctions2019, xiaoSufficientConditionsFeasibility2022}. 
Following the standard  definitions  \cite{amesControlBarrierFunction2017}, we assume that the \emph{safe set} $\mathcal{C}$ of the system is represented by the super-level set of a time-varying differentiable  function $h(\mathbf{x},t)$, i.e.,
\begin{equation} 
        \mathcal{C}(t) \coloneqq \{\mathbf{x}\in\mathbb{R}^n: h(\mathbf{x},t)\geq 0\}.
        \label{eq.safe_set}
\end{equation}
Note that when the control law $\mathbf{u} \coloneqq k(\mathbf{x})$ is locally Lipschitz continuous, the   control affine model $\dot{\mathbf{x}} = f(\mathbf{x}) + g(\mathbf{x})k(\mathbf{x})$ is still locally Lipschitz continuous. Therefore, for any initial condition $\mathbf{x}(0)\!\in\!\mathcal{C}(0)$, system  \eqref{eq.control_affine_model} always has a unique solution within a time interval. 
The safety of the system is defined by the following forward invariance property.  
\begin{Definition}[Forward Invariance and Safety\cite{xiaoHighOrderControlBarrier2022}]\upshape
    The set $\mathcal{C}$ is said to be \emph{forward invariant} for a given control law $\mathbf{u}$ if for every $\mathbf{x}(0)\in\mathcal{C}(0)$, we have $\mathbf{x}(t)\in\mathcal{C}(t)$ holds for all $t\in[0,\tau_{\text{max}})$,  where $\mathbf{x}(t)$ is the unique solution to   \eqref{eq.control_affine_model} starting from  $\mathbf{x}(0)$. 
    We say the system is \emph{safe} if safe set $\mathcal{C}$ is forward invariant. 
\end{Definition}



We use the concept of control barrier functions to characterize the admissible set of control inputs that guarantee the set $\mathcal{C}$ is forward invariant. Recall that a continuous function $\kappa:(-b,a)\to(-\infty,\infty)$ for $a, b \!\in\! \mathbb{R}_{\geq 0}$ belongs to the extended class of $\mathcal{K}$ functions (also called class $\mathcal{K}_e$ functions) if it is monotonically increasing and satisfies $\kappa(0) = 0$ \cite{khalilNonlinearSystems2015}. Now we introduce the definition of CBF as follows.

\begin{Definition}[Control Barrier Functions \cite{amesControlBarrierFunction2017}]\label{def:CBF}\upshape
Let $h(\mathbf{x},t):\mathbb{R}^n\times \mathcal{I}_{\text{max}}\to\mathbb{R}$ be a continuously differentiable function. 
We say  $h$ is a \emph{control barrier function} for  system  \eqref{eq.control_affine_model} if there exists a class $\mathcal{K}_e$ function $\kappa(\cdot)$ such that 
\begin{equation}
    \sup_{\mathbf{u}\in\mathcal{U}}\left[L_fh(\mathbf{x},t) + L_gh(\mathbf{x},t)\mathbf{u} + \frac{\partial h}{\partial t}\right]\geq -\kappa(h(\mathbf{x},t)),
    \label{eq.def_cbf}
\end{equation}
for all $(\mathbf{x},t)\!\in\!\mathcal{C}(t)\times \mathcal{I}_{\text{max}}$, where $L_fh(\mathbf{x},t) \coloneqq \frac{\partial h(\mathbf{x},t)^\top}{\partial \mathbf{x}}f(\mathbf{x})$ and $L_gh(\mathbf{x},t) \coloneqq \frac{\partial h(\mathbf{x},t)^\top}{\partial \mathbf{x}}g(\mathbf{x})$ are the Lie derivatives. 
\end{Definition}

Given a CBF $h(\mathbf{x},t)$, the set of all control values that render set $\mathcal{C}$ safe is given by
\begin{equation}
        K_{\text{cbf}} (\mathbf{x},t)\! \coloneqq\! \left\{\mathbf{u}\in\mathcal{U}:L_f h\! +\! L_g h \mathbf{u}\! +\! \frac{\partial h}{\partial t} \! +\!\kappa(h)\! \geq \! 0\right\},
    \label{eq.K_cbf}
\end{equation}
which denotes the safe control input set. The following theorem shows that the existence of a CBF can provide formal safety guarantee for the system. 

\begin{Theorem}[\!\! \cite{amesControlBarrierFunction2017}]\upshape
Let $h(\mathbf{x},t)$ be a valid CBF and assume that $\frac{\partial h(\mathbf{x},t)}{\partial \mathbf{x}}\neq \mathbf{0}^n$ for all $(\mathbf{x},t)\in \partial \mathcal{C}(t)\times\mathcal{I}_{\text{max}}\subset\mathbb{R}^n\times \mathbb{R}$. Then any Lipschitz continuous controller $\mathbf{u} \coloneqq k(\mathbf{x})$ such that $\mathbf{u} \in K_{\text{cbf}}(\mathbf{x},t)$ for every $(\mathbf{x},t)\in\mathcal{C}(t)\times \mathcal{I}_{\text{max}}$ will render set $\mathcal{C}$ forward invariant. 
\end{Theorem}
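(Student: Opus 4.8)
The plan is to prove this classical result by reducing the forward invariance of the time-varying set $\mathcal{C}(t)$ to a scalar comparison argument applied to the function $t \mapsto h(\mathbf{x}(t),t)$ along the closed-loop trajectory. First I would fix an arbitrary initial condition $\mathbf{x}(0) \in \mathcal{C}(0)$, and note that since $k(\mathbf{x})$ is Lipschitz continuous the closed-loop vector field $f(\mathbf{x}) + g(\mathbf{x})k(\mathbf{x})$ is Lipschitz, so by the standard existence-uniqueness theory there is a unique solution $\mathbf{x}(t)$ on the maximal interval $\mathcal{I}_{\text{max}} = [0,\tau_{\text{max}})$ referenced in the preliminaries. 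I would then define the scalar function $\eta(t) \coloneqq h(\mathbf{x}(t),t)$, which is differentiable on $\mathcal{I}_{\text{max}}$ by the chain rule because $h$ is continuously differentiable and $\mathbf{x}(\cdot)$ is $C^1$.

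Next I would compute $\dot{\eta}(t) = L_f h(\mathbf{x}(t),t) + L_g h(\mathbf{x}(t),t) k(\mathbf{x}(t)) + \frac{\partial h}{\partial t}(\mathbf{x}(t),t)$. The key observation is that whenever $\mathbf{x}(t) \in \mathcal{C}(t)$, the hypothesis $\mathbf{u} = k(\mathbf{x}(t)) \in K_{\text{cbf}}(\mathbf{x}(t),t)$ together with the definition of $K_{\text{cbf}}$ in \eqref{eq.K_cbf} gives exactly $\dot{\eta}(t) \geq -\kappa(\eta(t))$. So along the trajectory, as long as it stays in the safe set, $\eta$ satisfies the differential inequality $\dot{\eta}(t) \geq -\kappa(\eta(t))$ with $\eta(0) = h(\mathbf{x}(0),0) \geq 0$. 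I would then invoke the comparison lemma (Khalil, already cited in the excerpt): the solution of the scalar ODE $\dot{z} = -\kappa(z)$, $z(0) = 0$, is $z(t) \equiv 0$ since $\kappa(0) = 0$, and hence $\eta(t) \geq z(t) = 0$ for all $t$ in the interval of validity. This yields $h(\mathbf{x}(t),t) \geq 0$, i.e. $\mathbf{x}(t) \in \mathcal{C}(t)$, establishing forward invariance and therefore safety by definition.

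The main obstacle — and the point that needs care rather than a one-line appeal to the comparison lemma — is the circularity in the argument just sketched: the CBF inequality and the membership $k(\mathbf{x}(t)) \in K_{\text{cbf}}(\mathbf{x}(t),t)$ are only assumed to hold for $\mathbf{x}(t) \in \mathcal{C}(t)$, yet that is precisely what we are trying to prove holds for all $t$. The clean way to handle this is a standard "exit time" argument: suppose for contradiction that $h(\mathbf{x}(t),t) < 0$ for some $t \in \mathcal{I}_{\text{max}}$, let $t^* \coloneqq \inf\{t : h(\mathbf{x}(t),t) < 0\}$, and use continuity of $\eta$ to conclude $\eta(t^*) = 0$ and $\eta(t) \geq 0$ on $[0,t^*]$. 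On this sub-interval the differential inequality $\dot{\eta} \geq -\kappa(\eta)$ is legitimately available, so the comparison lemma gives $\eta(t) \geq 0$ on $[0,t^*]$, and moreover by examining the behavior near $t^*$ — either directly, or by noting $\dot{\eta}(t^*) \geq -\kappa(\eta(t^*)) = -\kappa(0) = 0$ — one derives that $\eta$ cannot become negative immediately after $t^*$, contradicting the definition of $t^*$. This closes the loop and completes the proof; the role of the nondegeneracy assumption $\frac{\partial h}{\partial \mathbf{x}} \neq \mathbf{0}^n$ on $\partial \mathcal{C}(t)$ is to guarantee that $0$ is a regular value so the boundary behaves well and the trajectory genuinely must cross through $\eta = 0$ rather than the analysis degenerating there.
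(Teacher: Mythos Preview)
Your proof is correct and follows the standard comparison-lemma / exit-time route used in the CBF literature. Note, however, that the paper does not actually prove this theorem: it is stated with a citation to \cite{amesControlBarrierFunction2017} and no proof is given in the paper itself, so there is no in-paper argument to compare against. Your write-up is precisely the argument one finds in Ames et al.\ and Khalil: reduce to the scalar inequality $\dot{\eta} \geq -\kappa(\eta)$, compare with the solution of $\dot{z} = -\kappa(z)$ (which is nonnegative whenever $z(0)\geq 0$ since $\kappa(0)=0$ and $\kappa$ is increasing), and close the circularity with a first-exit-time contradiction. One minor sharpening: when you invoke the comparison lemma you should compare with the solution $z$ satisfying $z(0) = \eta(0) \geq 0$ rather than $z(0)=0$; either way $z(t)\geq 0$, but the former is what the lemma literally gives you. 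Your remark on the role of the gradient condition $\partial h/\partial \mathbf{x} \neq \mathbf{0}^n$ is also the right intuition---it is the regularity hypothesis that makes Nagumo's theorem / the boundary analysis go through cleanly.
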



 For control affine systems in \eqref{eq.control_affine_model}, $K_{\text{cbf}}$ in \eqref{eq.K_cbf} leads to linear constraints on  control inputs, which can be incorporated into computationally tractable QP-based controllers \cite{amesControlBarrierFunction2017}. When incorporating different driving preferences in on-road overtaking scenarios, conventional methods require us to redesign $h$ case-by-case, whereas we introduce our VL-CBF method to provide adjustable level set of $h$ in the next section.  


\section{Varying-Level Control Barrier Functions}
\label{sec.VL-CBF}

In the preceding section, we presented a fundamental definition of control barrier functions  as a crucial tool for ensuring safety. In this section, we introduce a novel concept called \emph{Varying-Level Control Barrier Functions}. This concept is aimed at improving the explanatory power and user parametrization capabilities of CBFs.

\subsection{Varying-Level Control Barrier Functions}
\label{sec.VL-CBF_properties}

Note that Definition~\ref{def:CBF} for CBF only requires the existence of a $\mathcal{K}_e$ function $\kappa(\cdot)$ for $h$. However, different functions essentially correspond to different autonomous system behavior patterns. In autonomous driving, control laws must characterize user-defined diverse behavior patterns and be explainable. To this end, the concept of parametric CBF (PCBF) was proposed in \cite{lyuAdaptiveSafeMerging2022} to capture these requirements, where the function $\kappa(\cdot)$ is restricted to a polynomial with odd powers of $h$. The coefficients in the polynomial provide parameters for the driver to choose, achieving the purpose of user-defined diverse behavior patterns.

However, PCBF can only enforce the \emph{zero super-level set} as described in \eqref{eq.safe_set}. As value function $h(\mathbf{x},t)$ is typically selected as the distance between two vehicles, enforcement of the zero super-level set essentially means that two vehicles can be arbitrarily close, as long as they do not collide. In practice, drivers are more conservative and want to enforce a \emph{positive super-level set} of $h(\mathbf{x},t)$. This conservative degree should also be parameterized, so that each user can choose their safety margin. To this end, motivated by the existing PCBF, we propose the following VL-CBF. 

\smallskip
\begin{Definition}[Varying-Level Control Barrier Functions]\upshape \label{def.VL-CBF}
Given a dynamic system model as in \eqref{eq.control_affine_model}, and the nominal safe set $\mathcal{C}(t)$ defined as \eqref{eq.safe_set}  with time-varying differentiable value function $h(\mathbf{x},t):\mathbb{R}^n\times\mathcal{I}_{\text{max}} \to\mathbb{R}$. 
Then $h$ is said to be a varying-level control barrier function (VL-CBF) of order $m\geq 1$ if
    \begin{equation}
        \sup_{ \mathbf{u}\in \mathcal{U}}[L_f h(\mathbf{x},t) + L_g h(\mathbf{x},t)\mathbf{u} + \frac{\partial h(\mathbf{x},t)}{\partial t}]
        \geq -\Lambda H(\mathbf{x},t),
        \label{eq.VL-CBF}
    \end{equation}
holds for all $(\mathbf{x},t)\!\in\!\mathcal{C}(t)\times \mathcal{I}_{\text{max}}$, 
where 
$\Lambda \!\coloneqq\! [\lambda_0,\lambda_1,\dots,\lambda_m]\!\in\!\mathbb{R}^{m+1}$
and 
$H(\mathbf{x},t) \coloneqq [1, h, h^3,h^5,\dots,h^{2m-1}]^\top$ 
such that 
$\Lambda H(\mathbf{x},t)$ is a polynomial equation consists of a constant and odd powers of  $h$.
\end{Definition}





\noindent The purpose of defining VL-CBF is to enforce an arbitrary driver-defined non-negative super-level set 
\begin{equation}
    \mathcal{C}_{\epsilon}(t) \coloneqq \{x\in\mathcal{X}: h(\mathbf{x},t) \geq \epsilon\},
\end{equation}
such that more conservative safety requirement can be fulfilled. 
Similar to the case of standard CBF, we define
\begin{equation}\label{eq:vlcbf-input}
    K_{\text{vlcbf}}(\mathbf{x},t) \!\coloneqq\! \left\{\mathbf{u}\in\mathcal{U}\!:\!L_f h \!+\! L_g h\mathbf{u} \!+\! \frac{\partial h}{\partial t} \!+\! \Lambda H \geq{0}\right\}
\end{equation}
as the set of safe control inputs. 
Then the following result shows that this purpose can be achieved by suitably choosing a negative parameter $\lambda_0$, which will be referred to as the \emph{level parameter}, in $\Lambda$. 

\begin{Theorem}\upshape
    \label{thm.VL-CBF_levelset}
Let $h(\mathbf{x},t)$ be a valid VL-CBF, and assume that $\frac{\partial h}{\partial\mathbf{x}}\neq \mathbf{0}^n$ for all $(\mathbf{x},t)\in\partial\mathcal{C}(t)\times \mathcal{I}_{\text{max}}\subset\mathbb{R}^n\times\mathbb{R}$. Then for any $\epsilon \in \mathbb{R}_{\geq{0}}$ 
there exists a set of parameter $\Lambda$, where the level parameter $\lambda_0\in\mathbb{R}_{\leq 0}$, and $\lambda_i\in\mathbb{R}_{\geq{0}}$ for $i\in\{1,\dots,m\}$, such that any Lipschitz continuous controller satisfying $\mathbf{u}\in K_{\text{vlcbf}}(\mathbf{x},t)$ for every $(\mathbf{x},t)\in\mathcal{C}_\epsilon(t)\times \mathcal{I}_{\text{max}}$ will render set $\mathcal{C}_\epsilon$ forward invariant.
\end{Theorem}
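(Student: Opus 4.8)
The plan is to reduce the statement to the standard time‑varying CBF forward‑invariance result recalled above, applied to the \emph{shifted} value function $\tilde h(\mathbf{x},t)\coloneqq h(\mathbf{x},t)-\epsilon$. Since $\epsilon$ is constant, $L_f\tilde h=L_f h$, $L_g\tilde h=L_g h$ and $\tfrac{\partial\tilde h}{\partial t}=\tfrac{\partial h}{\partial t}$, so $\mathcal{C}_\epsilon(t)=\{\mathbf{x}:\tilde h(\mathbf{x},t)\ge 0\}$ plays for $\tilde h$ exactly the role of the nominal safe set \eqref{eq.safe_set}, with $\partial\mathcal{C}_\epsilon(t)=\{\mathbf{x}:h(\mathbf{x},t)=\epsilon\}$ on which $\tfrac{\partial\tilde h}{\partial\mathbf{x}}=\tfrac{\partial h}{\partial\mathbf{x}}\neq\mathbf{0}^n$. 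Hence it suffices to exhibit a choice of $\Lambda$ and a class $\mathcal{K}_e$ function $\bar\kappa$ such that membership $\mathbf{u}\in K_{\text{vlcbf}}(\mathbf{x},t)$ in \eqref{eq:vlcbf-input} forces $L_f\tilde h+L_g\tilde h\,\mathbf{u}+\tfrac{\partial\tilde h}{\partial t}\ge-\bar\kappa(\tilde h)$ on $\mathcal{C}_\epsilon(t)\times\mathcal{I}_{\text{max}}$; the forward‑invariance argument of the cited CBF theorem, applied to $\tilde h$, then gives invariance of $\mathcal{C}_\epsilon$.

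The construction of $\Lambda$ and $\bar\kappa$ is where the real work is. First fix $\lambda_1,\dots,\lambda_m\ge 0$ with $\lambda_1>0$ (these encode the driver's preference) and then pick the level parameter $\lambda_0\le 0$ negative enough that $\lambda_0+\sum_{i=1}^{m}\lambda_i\epsilon^{2i-1}\le 0$; this single scalar inequality is precisely the statement that $\Lambda H(\mathbf{x},t)\le 0$ whenever $h(\mathbf{x},t)=\epsilon$, i.e. on $\partial\mathcal{C}_\epsilon$. Next I would rewrite the polynomial $\Lambda H$ in the variable $\tilde h=h-\epsilon$: expanding each $h^{2i-1}=(\tilde h+\epsilon)^{2i-1}$ by the binomial theorem and using $\epsilon\ge 0$, $\lambda_i\ge 0$, one obtains $\Lambda H=c_0+c_1\tilde h+c_2\tilde h^2+\dots+c_{2m-1}\tilde h^{2m-1}$ with constant term $c_0=\lambda_0+\sum_i\lambda_i\epsilon^{2i-1}\le 0$ and all higher coefficients $c_j\ge 0$ (and $c_1\ge\lambda_1>0$). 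Defining $\bar\kappa(s)\coloneqq\sum_{j\ge 1}c_j s^j$ for $s\ge 0$ and extending it by $\bar\kappa(s)=c_1 s$ for $s<0$ gives a continuous, strictly increasing function with $\bar\kappa(0)=0$, hence a class $\mathcal{K}_e$ function, and $\Lambda H=c_0+\bar\kappa(\tilde h)\le\bar\kappa(\tilde h)$ for every $\tilde h\ge 0$.

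With these choices the conclusion follows. Take any Lipschitz $\mathbf{u}=k(\mathbf{x})$ with $k(\mathbf{x})\in K_{\text{vlcbf}}(\mathbf{x},t)$ on $\mathcal{C}_\epsilon(t)\times\mathcal{I}_{\text{max}}$; the closed loop \eqref{eq.control_affine_model} is Lipschitz, hence has a unique solution on $[0,\tau_{\text{max}})$. By \eqref{eq:vlcbf-input}, along the trajectory $\dot{\tilde h}=L_f h+L_g h\,k(\mathbf{x})+\tfrac{\partial h}{\partial t}\ge-\Lambda H\ge-\bar\kappa(\tilde h)$ whenever $\tilde h\ge 0$. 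Comparing with $\dot\mu=-\bar\kappa(\mu)$, $\mu(0)=\tilde h(\mathbf{x}(0),0)\ge 0$ — whose solutions remain nonnegative because $0$ is an equilibrium of a locally Lipschitz vector field — together with a first‑exit‑time argument at $\partial\mathcal{C}_\epsilon$ (where the gradient condition is used), yields $\tilde h(\mathbf{x}(t),t)\ge 0$, i.e. $\mathbf{x}(t)\in\mathcal{C}_\epsilon(t)$, for all $t\in[0,\tau_{\text{max}})$.

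The step I expect to be the main obstacle is not the invariance argument but reconciling the two pulls on $\lambda_0$: making it sufficiently negative to shift the invariant level to $\epsilon$, while keeping $h$ a \emph{valid} VL‑CBF in the sense of \eqref{eq.VL-CBF} with respect to the \emph{new} $\Lambda$, so that $K_{\text{vlcbf}}(\mathbf{x},t)$ is nonempty and the statement is not vacuous. Since $h$ is assumed a valid VL‑CBF, $\sup_{\mathbf{u}}[L_f h+L_g h\mathbf{u}+\tfrac{\partial h}{\partial t}]\ge-\Lambda^{(0)}H$ for some parameter vector $\Lambda^{(0)}$, so it is enough to choose $\Lambda$ with $-\Lambda H\le-\Lambda^{(0)}H$ on $\mathcal{C}_\epsilon$, i.e. so that the polynomial $\Lambda H$ dominates $\Lambda^{(0)}H$ for $h\ge\epsilon$; this can be arranged by inflating the nonnegative coefficients $\lambda_1,\dots,\lambda_m$ and taking the order $m$ large enough relative to $\deg(\Lambda^{(0)}H)$ before pushing $\lambda_0$ down to meet the level‑shift condition, and checking that all of this is simultaneously feasible is the delicate part. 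A minor point to flag is that the non‑degeneracy $\tfrac{\partial h}{\partial\mathbf{x}}\neq\mathbf{0}^n$ is needed on $\partial\mathcal{C}_\epsilon(t)=\{h=\epsilon\}$, so the hypothesis must be read as covering the relevant level set rather than only $\{h=0\}$.
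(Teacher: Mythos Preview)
Your proposal is correct, but the paper's proof is considerably more direct. Rather than shifting to $\tilde h=h-\epsilon$, expanding $\Lambda H$ via the binomial theorem, and building a new class $\mathcal{K}_e$ function $\bar\kappa$ from the coefficients $c_j$, the paper simply observes that the odd-power part $\tilde\kappa(h)\coloneqq\sum_{i=1}^m\lambda_i h^{2i-1}$ is \emph{already} a class $\mathcal{K}_e$ function of $h$, and then makes the single choice $\lambda_0=-\tilde\kappa(\epsilon)$ (equality, not an inequality). The VL-CBF constraint then reads $\dot h\ge\tilde\kappa(\epsilon)-\tilde\kappa(h)$, and since $\tilde\kappa$ is strictly increasing this vanishes exactly at $h=\epsilon$; a Nagumo-type boundary argument finishes the proof in two lines. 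Your route buys more: the reduction to the standard CBF theorem for $\tilde h$ is cleaner logically, the comparison-principle step is more rigorous than the paper's informal ``$h\to\epsilon\Rightarrow\dot h\ge 0$'', and you correctly flag both the non-vacuousness of $K_{\text{vlcbf}}$ and the fact that the gradient hypothesis is really needed on $\{h=\epsilon\}$ rather than $\{h=0\}$---issues the paper does not address. The cost is the binomial-expansion machinery, which the paper avoids entirely by working with $\tilde\kappa$ in the original variable.
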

\begin{proof} 
        We define $\tilde{\kappa}(h) \coloneqq [\lambda_1,\dots,\lambda_m][h,\dots,h^{2m-1}]^{\top}$, which is a polynomial sum starting from the second term of the polynomial equation $\Lambda H$. As proved in \cite{lyuAdaptiveSafeMerging2022}, $\tilde{\kappa}(h)$ is a class $\mathcal{K}_e$ function of $h$. Using this new notation, \eqref{eq.VL-CBF} can be rewritten as
        \begin{equation}
            L_f h + L_g h\mathbf{u} + \frac{\partial h}{\partial t} \geq -\lambda_0 - \tilde{\kappa}(h),
            \label{eq.VL-CBF_proof}
        \end{equation}
        where the right hand side of the equation consists of a class $\mathcal{K}_e$ function and a level parameter $\lambda_0$. For any $\epsilon \in\mathbb{R}_{\geq{0}}$, the initial condition $\mathbf{x}(0)\in\mathcal{C}_\epsilon(0)$ implies $h(\mathbf{x}(0),0) \geq \epsilon$, then we show that there always exists a $\lambda_0$ that can render $\mathcal{C}_\epsilon$ forward invariant.
        
        Let $\lambda_0 = -\tilde{\kappa}(\epsilon)\in\mathbb{R}_{\leq{0}}$, through \eqref{eq.VL-CBF_proof} we have $\dot{h}\geq \tilde{\kappa}(\epsilon) - \tilde{\kappa}(h)$ for all $t\in\mathcal{I}_{\text{max}}$ whenever $h > \epsilon$, which suggests the value of $h$ might decrease when $h$ is greater than $\epsilon$; whereas, when $h\to\epsilon$ we have $\dot{h}\geq 0$, suggesting the value of $h$ can not decrease further below $\epsilon$. Thus $\forall \mathbf{x}(0)\in\mathcal{C}_\epsilon(0)$, we always have $ h(\mathbf{x},t)\geq \epsilon,\forall t\in\mathcal{I}_{\text{max}}$. This completes the proof. 
\end{proof}

As illustrated in Figure~\ref{fig.VL-CBF_graph},
we show a VL-CBF  parameterized by three parameters $\Lambda = [\lambda_0, \lambda_1, \lambda_2]$, under the same value function $h$. 
Specifically,  different choices of non-level parameters $\lambda_1$ and $\lambda_2$ yields different divergent horizons of its level set.
Furthermore, when those non-level parameters are fixed, 
choosing a smaller level parameter  $\lambda_0$ yields a higher the horizon of the super-level set for $h(\mathbf{x},t)$, which implies that a more conservative safety control strategy  enforcing a positive super-level set. The top blue curve illustrates the case when the initial value for $h$ is lower than the safety level, thus the value for $h$ will rise from the beginning.

\smallskip
\begin{Remark}
Note that the requirement of enforcing the $\epsilon$-super-level set of $h$ can also be achieved via the standard CBF by redefining function $h$. This, however, is a hand-coded approach, and one needs to change the underlying $h$ for different $\epsilon$. The main feature of the proposed VL-CBF is that we can avoid this hand-coding. The level parameter $\lambda_0$ gives the system the ability to change the horizon of its super-level set for $h(\mathbf{x},t)$, without the need to redesign the original $h(\mathbf{x},t)$ for different conservativeness degrees. This setting makes more sense for decentralized autonomous vehicle control, since under the identical safety criteria, each individual may have its own interpretation of the conservativeness towards the safety boundary.
\end{Remark}
\smallskip

\begin{figure}
    \centering
    \includegraphics[width = 0.89\linewidth]{./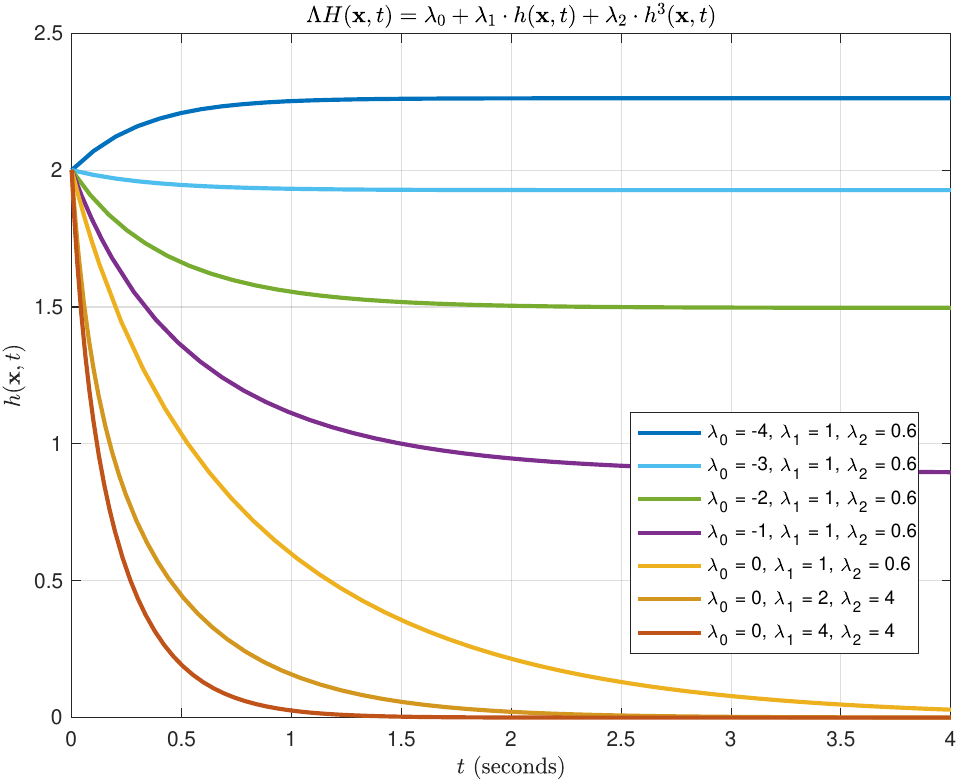}
    \caption{Arbitrary super-level set enforced by VL-CBF.}
    \label{fig.VL-CBF_graph}
\end{figure}

\subsection{Safety Guarantees for Two-Way Overtaking via VL-CBF}
\label{sec.safety_guarantees}
 For the two-way road overtaking problem, we need to in particular ensure that $veh_o$ does not collide with $veh_e$ while the ego vehicle stretching out from $\mathcal{L}_{\text{ego}}$ borrowing the $\mathcal{L}_{\text{opp}}$.
We show that, by equipping with the proposed VL-CBF, the ego vehicle and the opposing vehicle can keep safe as long as the initial condition is satisfied. 

Considering the longitudinal velocity of both vehicles, the safety condition for $veh_e$ is defined through the following value function
\begin{equation}
    h_{eo}(\mathbf{x}_e, t) \coloneqq x_o - x_e - \frac{(v_e^x - v_o^x)^2}{2 a_l}.
    \label{eq.two_way_value_function_e}
\end{equation}
For the opposing vehicle  $veh_o$, by assuming that it is also autonomous and adopts VL-CBF, the safety condition is defined as
\begin{equation}
    h_{oe}(\mathbf{x}_o, t) \coloneqq x_o - x_e - \frac{(v^x_e - v^x_o)^2}{2a_l}.
    \label{eq.two_way_value_function_o}
\end{equation}
In the above equations,  $a_l$ is the maximum deceleration rate, 
and $\mathbf{x}_e,\mathbf{x}_o$ stand for the states of $veh_e, veh_o$ respectively. 
We use $x_e$ to represent the first element of $\mathbf{x}_e$ in \eqref{eq.kinematic_bicycle} with $v^x_e\coloneqq \dot{x}_e$; 
similarly, $x_o$ and $v^x_o$ are two elements of $\mathbf{x}_o$ in \eqref{eq.double_integrator}. Furthermore, we introduce two symbols $\alpha_e^x, \alpha_o^x$ which will be used later, with $\alpha_e^x \coloneqq \ddot{x}_e$ and $\alpha_o^x$ is the control input of the system as defined  in Equation~\eqref{eq.double_integrator}. 
Note that in Equations~\eqref{eq.two_way_value_function_e} and \eqref{eq.two_way_value_function_o}, $x_o, v_o^x$ and $x_e, v_e^x$ are time-varying parameters respectively.
As   shown in Equations~\eqref{eq.two_way_value_function_e} and \eqref{eq.two_way_value_function_o}, the value functions $h_{eo}$ and $h_{oe}$ have the same expression but different independent variables, suggesting that they have the same interpretation of their relative status. 

The safety guarantee with respect to the opposing vehicle during the lane borrowing process is established as follows. 
\begin{Theorem}\upshape 

Consider VL-CBFs defined in Equations~\eqref{eq.two_way_value_function_e} and \eqref{eq.two_way_value_function_o}, and assume that 
both $veh_e$ and $veh_o$ follow the safety control law defined in Equation~\eqref{eq:vlcbf-input}. 
Then if the initial safety condition $\Lambda_\times H_\times \geq 0,~ \forall \times \in \{eo, oe\}$ is satisfied, the collision-free solution exists for both vehicles under the control limits.
    \label{thm.safety} 
\end{Theorem}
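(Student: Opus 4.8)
The plan is to derive the statement from Theorem~\ref{thm.VL-CBF_levelset} (the forward‑invariance guarantee for VL‑CBFs). The reduction isolates three things to check: that $h_{eo}$ is a valid VL‑CBF for the dynamics of $veh_e$, that $h_{oe}$ is a valid VL‑CBF for the dynamics of $veh_o$, and that forward invariance of the associated $\epsilon$‑superlevel sets delivers both collision avoidance and persistent feasibility of the controller \eqref{eq:vlcbf-input}. The last implication is the easy one: since $h_{eo}$ and $h_{oe}$ are the \emph{same} expression $x_o - x_e - \frac{(v_e^x - v_o^x)^2}{2a_l}$, invariance gives $h_{eo}(t) = h_{oe}(t) \ge \epsilon \ge 0$ for all $t \in \mathcal{I}_{\text{max}}$, hence $x_o - x_e \ge \frac{(v_e^x - v_o^x)^2}{2a_l} \ge 0$ throughout, i.e. a strictly positive longitudinal gap between $veh_e$ and $veh_o$ is maintained during the lane‑borrowing phase; and validity of the VL‑CBF is exactly the statement that $K_{\text{vlcbf}}(\mathbf{x},t) \ne \emptyset$ at every instant.

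To verify the VL‑CBF conditions I would first compute the barrier derivative along the coupled trajectories. Writing $w \coloneqq v_e^x - v_o^x$ for the closing rate and $d \coloneqq x_o - x_e$ for the gap, differentiation gives $\dot h = -w\bigl(1 + \tfrac{\alpha_e^x - \alpha_o^x}{a_l}\bigr)$, where for $veh_e$ the term $\alpha_e^x = \ddot x_e$ is generated from the bicycle inputs $(\alpha,\beta)$ via $\ddot x_e = \alpha\cos\psi - \frac{v^2\beta}{l_r}\sin\psi$, while $\alpha_o^x$ is the genuine input of the double integrator \eqref{eq.double_integrator}. The crucial structural remark is that $h_{eo}$ and $h_{oe}$ constrain the \emph{same} scalar $\dot h$, so the two VL‑CBF inequalities are cooperative rather than competing; what remains is to show the combined control authority can keep $\dot h$ above their right‑hand sides near the safety boundary. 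I would exhibit the explicit witness ``both vehicles brake as hard as possible'': $\alpha_e^x \to -a_l$ (admissible once we assume $a_l$ does not exceed the actuator deceleration limit and the heading $\psi$ stays small while borrowing the lane), and since $veh_o$ approaches head‑on, its maximal deceleration makes $\alpha_e^x - \alpha_o^x \to -2a_l$; then $\dot h \to w \ge 0$ in the closing regime $w \ge 0$, so $\dot h \ge 0 \ge -\Lambda_\times H_\times$ on $\mathcal{C}_\epsilon$ (where $\Lambda_\times H_\times = \lambda_0 + \tilde\kappa(h) \ge 0$ because $\lambda_0 = -\tilde\kappa(\epsilon)$ and $h \ge \epsilon$), while in the non‑closing regime $w < 0$ the gap is already opening and a milder, evidently admissible control suffices. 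Hence both $h_{eo}$ and $h_{oe}$ are valid VL‑CBFs, and along the closed loop in which both vehicles run \eqref{eq:vlcbf-input} these controllers are simultaneously feasible at all times. The hypothesis $\Lambda_\times H_\times \ge 0$ at $t = 0$ for $\times \in \{eo, oe\}$ then places the initial state in the $\epsilon$‑superlevel set that Theorem~\ref{thm.VL-CBF_levelset} renders invariant, and the two easy implications above give the collision‑free trajectory together with the persistent existence of a feasible control solution under the control limits.

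The step I expect to be the real obstacle is establishing the VL‑CBF conditions in a way that respects the coupling: $veh_e$'s inequality contains $\alpha_o^x$, which $veh_e$ does not control, and symmetrically for $veh_o$, so feasibility cannot be argued one vehicle at a time against an arbitrary opponent — an adversarial $\alpha_o^x$ would drive $\dot h < 0$ even under $veh_e$'s hardest braking. The argument must therefore lean on the standing assumption that $veh_o$ also obeys the safety‑critical law, so that near $h = \epsilon$ its own VL‑CBF constraint forbids it from accelerating toward $veh_e$; turning this mutual‑consistency observation into a rigorous joint‑feasibility statement, together with the relative‑degree bookkeeping that maps $(\alpha,\beta)$ onto the attainable range of $\ddot x_e$, is the delicate part of the proof.
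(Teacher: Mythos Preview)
Your proposal is correct and follows essentially the same route as the paper's proof: both arguments compute the barrier derivative, exhibit the joint witness ``both vehicles brake at rate $a_l$'' to obtain $\dot h = w \ge 0 \ge -\Lambda_\times H_\times$ on the relevant superlevel set, and then invoke Theorem~\ref{thm.VL-CBF_levelset}. The paper packages the same calculation slightly differently---it rewrites each VL-CBF inequality as an upper bound on $\alpha_e^x - \alpha_o^x$, takes $\mathcal{H} \coloneqq \min\{\Lambda_{eo}H_{eo},\Lambda_{oe}H_{oe}\}$ to merge them, and reduces feasibility to the scalar condition $\mathcal{H} + (v_e^x - v_o^x) \ge 0$, which is immediate from $\mathcal{H}\ge 0$ and $v_e^x - v_o^x > 0$---but this is algebraically equivalent to your direct substitution of the extreme controls into $\dot h$. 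Your version is in fact a bit more careful: you separate the closing and non-closing regimes, you flag that achieving $\ddot x_e = -a_l$ through the bicycle inputs $(\alpha,\beta)$ requires a small-heading assumption, and you correctly identify that the coupling (each vehicle's constraint containing the other's control) is what forces a \emph{joint} feasibility argument rather than two independent ones. The paper glosses over all three of these points, so the concern you raise in your final paragraph is real but is resolved exactly by the witness you already exhibited.
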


\begin{proof}
As the value function defined as \eqref{eq.two_way_value_function_e}\eqref{eq.two_way_value_function_o}, we have $H_{eo} = [1,h_{eo},\dots,h_{eo}^{2m-1}]$ and $H_{oe} = [1,h_{oe},\dots,h_{oe}^{2m-1}]$ sharing the same expression. For $veh_e$ in $\mathcal{L}_{\text{ego}}$, to satisfy VL-CBF we have
\begin{equation}
    \begin{aligned}
        \dot{h}_{eo} = v_o - v_e-\frac{2(v^x_e - v^x_o)(\alpha^x_e - \alpha^x_o)}{2a_l} \geq -\Lambda_{eo} H_{eo}
    \end{aligned}
    \label{eq.value_derivative}
\end{equation}
which yields
\begin{equation}
    \alpha^x_e - \alpha^x_o \leq \frac{a_l}{v^x_e - v^x_o}(\Lambda_{eo} H_{eo} + v^x_o - v^x_e), 
    \label{eq.VL-CBF_e}
\end{equation}
whereas for $veh_o$ in the opposite direction, we have 
\begin{equation}
    \alpha^x_e - \alpha^x_o \leq \frac{a_l}{v^x_e - v^x_o}(\Lambda_{oe} H_{oe} + v^x_o - v^x_e).
    \label{eq.VL-CBF_o}
\end{equation}
We first show that for all the scenarios under VL-CBF, we can find a joint relation so that the safety can be preserved under the maximum deceleration limit $a_l$. We define a new variable $\mathcal{H} \coloneqq \min\{\Lambda_\times H_\times\}, \times\in\{eo,oe\}$, and by combining \eqref{eq.VL-CBF_e} and \eqref{eq.VL-CBF_o}, we have
\begin{equation}
    \alpha^x_e - \alpha^x_o \leq \frac{a_l \mathcal{H}}{v^x_e - v^x_o} - a_l.
    \label{eq.VL-CBF_combined}
\end{equation}
To ensure the feasibility of the proposed barrier function, the right-hand side of \eqref{eq.VL-CBF_combined} must be greater than the lowest possible value of the difference in controller value when $\alpha^x_e = -a_l, \alpha^x_o = a_l$. Note that, $a_l$ is the absolute value of the maximum deceleration rate, which corresponds to vehicles decelerating at their extreme capacity. Thus, the following inequality must hold
\begin{equation}
    \frac{a_l \mathcal{H}}{v^x_e - v^x_o} \geq -a_l.
\end{equation}
Since $v^x_e - v^x_o > 0$ when $v^x_e$ is positive and $v^x_o$ is negative, the feasibility of this safety solution can be guaranteed if the two vehicles satisfy
\begin{equation}
    \mathcal{H} + v^x_e - v^x_o \geq 0.
    \label{eq.VL-CBF_feasibility}
\end{equation}
Satisfaction of \eqref{eq.VL-CBF_feasibility} suggests the existence of the solution at least at their extreme control capability.

Then, we show that \eqref{eq.VL-CBF_feasibility} holds if the initial condition is satisfied: we have $\mathcal{H} \geq 0$ as  $\Lambda_\times H_\times \geq 0$ is fulfilled. Since $v^x_e -v^x_o$ is positive, it follows that the inequality in \eqref{eq.VL-CBF_feasibility} is always satisfied, which indicates the feasibility of the original VL-CBF. According to Theorem~\ref{thm.VL-CBF_levelset}, safety can be ensured by the invariant property of its super-level set. This completes the proof.
\end{proof}
\smallskip

\begin{Remark}
In the above result, the initial safety condition $\Lambda_\times H_\times$ defines a nominal safe margin for the vehicle, indicating the edge of the safe space under the conservativeness degree $\epsilon$ decided by $\Lambda_\times$.
Note that, the system can further withstand disturbances by raising the VL-CBF level to increase conservativeness, and when $\lambda_0 = 0$, this initial condition simply indicates that under the maximum deceleration rate, the two vehicles can stop at the very last moment with the minimum conservative distance they defined, which is their physical limit at the most extreme situation.
\end{Remark}

\section{Two-Way Road Overtaking Framework}
\label{sec.framework}
In the preceding section, we have developed the basic tool of VL-CBF for ensuring safety in the presence of opposing vehicles. 
In this section, we present our overall overtaking framework, which can tackle both the front vehicle as well as the potential opposing vehicle. To present the integrated framework, we proceed in this section as follows:
\begin{itemize}[leftmargin=*] 
    \item 
    First, we focus on the problem of overtaking the front vehicle without considering the potential opposing vehicle. 
    Our approach is based on solving a time-optimal control problem by MPC in which the previously introduced VL-CBF is used for safety guarantees. 
    \item 
    Then we introduce an integrated overtaking framework for dealing with the potential opposing traffic. Specifically, at each instant, we need to solve two safety-guarantee optimization problems: one for overtaking the front vehicle and the other for returning to the original position. 
    We prove that at least one of these two problems is always feasible, which provides a formal guarantees of safety for the overall framework. 
    \item 
    Finally, we discuss the extensibility of our framework to further incorporate the  semi-autonomous surrounding vehicles with human drivers. We show how safety guarantees can  still be ensured through conservative analysis.
\end{itemize} 


\subsection{CBF-Based Time-Optimal Model Predictive Control}
First, we focus on the overtaking scenario only involving the front vehicle. 
We formulate this scenario as a time-optimal control problem with safety constraints. 
Specifically, we solve this problem using model predictive control (MPC) whose objective function is to minimize the overtaking time yet maintaining a safe distance from the front vehicle, together with other constraints to consider the vehicle dynamics and performance factors.

\begin{figure} 
    \centering
    \includegraphics[width = 0.75\linewidth]{./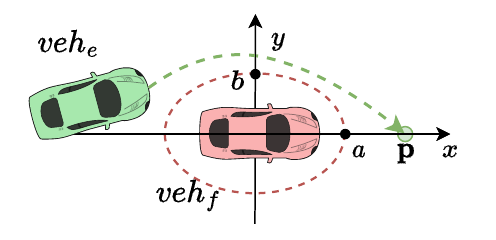}
    \caption{Relative ellipse around the front overtaken vehicle.}
    \label{fig.relative_CBF}
\end{figure} 

In order to efficiently handle the safety requirement with respect to the front overtaken vehicle, VL-CBF is used as a constraint in the optimization problem.  
As shown in Figure~\ref{fig.relative_CBF}, we model the safety requirement with respect to the front vehicle by an ellipse that covers the front vehicle, where $a, b$ are the radius  of the ellipse.  
Let $x_f$ and $y_f$ be the longitudinal and lateral states of the front overtaken vehicle, respectively, and we denote  $\mathbf{x}_f=[x_f,y_f]$. 
$\mathbf{x}_e$ is used as Section~\ref{sec.VL-CBF} to denote the state of the ego vehicle.
We define a (VL-)CBF 
\begin{equation}
    h_{ef}(\mathbf{x}_e,t) \coloneqq \beta_1\|x_e(t) - x_f(t)\|^2 + \beta_2\|y_e(t) - y_f(t)\|^2 - \beta_3,
\end{equation}
where $\beta_i,~i\in\{1,2,3\}$ are hyper-parameters satisfying $\beta_1/\beta_3 = a^2$ and $ \beta_2/\beta_3 = b^2$.  
Then this CBF defines the desired ellipse safe region  $\mathcal{C}^f$ with respect to the overtaken vehicle $veh_f$, i.e., 
\begin{equation}
    \mathcal{C}^f(t) \coloneqq \{\mathbf{x}_e\in\mathbb{R}^4: h_{ef}(\mathbf{x}_e,t) \geq 0\}.
    \label{eq.nominal_ellipse}
\end{equation} 
By incorporating the VL-CBF, the safety constraint on the control input $\mathbf{u}_e$ for overtaking $veh_f$ becomes
\begin{equation}
    L_f h_{ef} + L_g h_{ef}\mathbf{u}_e + \frac{\partial h_{ef}}{\partial t} \geq \Lambda_{ef}H_{ef},
\end{equation}
which resulting in a non-negative super-level set
\begin{equation}
    \mathcal{C}_\epsilon^f(t) \coloneqq \{\mathbf{x}_e\in\mathbb{R}^4: h(\mathbf{x}_e,t) \geq{\epsilon}\},
\end{equation}
with $\epsilon \in\mathbb{R}_{\geq{0}}$.

Now we formally introduce the time-optimal CBF-based model predictive control problem as described in the optimization problem TO-CBF-MPC.
Let $N$ be a given prediction horizon of the entire \emph{steps} of the optimization problem.   Then the problem is formulated as follows:
\begin{itemize}[leftmargin=*] 
    \item 
    The decision variables consist of a trajectory $\{\mathbf{x}_e\}^N$ of the ego vehicle, a control input sequence $\{\mathbf{u}\}_e^N$ and a sequence of time intervals $\{\Delta t\}^N$. 
    We denote by $\mathbf{x}_e^i$, $\mathbf{u}_e^i$ and $\Delta t_i$ the $i$th entries in  $\{\mathbf{x}_e\}^N$, 
    $\{\mathbf{u}_e\}^N$ and $\{\Delta t\}^N$ respectively. 
    Then  $\Delta t_i$ is the time interval between states $\mathbf{x}_e^i$ and $\mathbf{x}_e^{i+1}$ 
    within which control input $\mathbf{u}_e^i$ is applied.  
    Their relations are captured by the   constraints in Equations~\eqref{eq.tompc_dy}, \eqref{eq.tompc_u} and \eqref{eq.tompc_v}. 
    Furthermore, Equations~\eqref{eq.tompc_t} put constraint on the maximum time interval between each intermediate point in order to mitigate the inter-sampling effect \cite{xiaoDifferentiableControlBarrier2022}.
    \item 
    The terminal constraint of the trajectory is captured by Equations~\eqref{eq.tompc_goal} and \eqref{eq.tompc_y}. 
    Specifically, we choose a reference goal location $\mathbf{p} = [x_g,y_g]$ in front of the overtaken vehicle in its longitudinal direction. Parameter $\epsilon_y$ is the lateral bias parameter to limit the optimized final position.  
    Here $x_g$ is usually selected  based on the speed and reaction time of the front vehicle $veh_f$ with $x_g = x_f + \phi v_f(t)$, $\phi$ is the reaction time (usually takes 1.8s \cite{xiaoDifferentiableControlBarrier2022}).
    \item 
    The safety constraint for the front vehicle (or surrounding vehicle during the overtaking process) is captured by Equation~\eqref{eq.tompc_cbf}, which is a VL-CBF constraint that ensures all planned states $\{\mathbf{x}\}^N$ are collision-free with minimum clearance to the ellipse defined in \eqref{eq.nominal_ellipse} of $\tilde{\kappa}^{-1}(\lambda_0)$, where $\tilde{\kappa}^{-1}$ is the inverse function of $\tilde{\kappa}$ in Theorem~\ref{thm.VL-CBF_levelset}, and $\lambda_0$ here is the first element of $\Lambda_{ef}$ in \eqref{eq.tompc_cbf}.
    \item 
    The overall optimization objective is to estimate and minimize the total time taken before reaching the goal location, i.e., $\sum_{i=0}^{N-1} \Delta t_i$. Therefore, the result of the optimization problem also provides us an \emph{estimated time} for completing the overtaking task. 
\end{itemize}

\smallskip

\begin{boxD}
{ \textbf{TO-CBF-MPC:}}
\begin{subequations}
    \begin{align}
        &\mathop{\min}_{\{\mathbf{u}_e\}^N\in\mathcal{U}^N,\{\mathbf{x}_e\}^N\in\mathcal{X}^N,\{\Delta t\}^N} J = \sum_{i=0}^{N-1} \Delta t_i\\
         &\text{subject to}~ \nonumber \\
         \begin{split}
             & \mathbf{x}_e^{i+1} = f(\mathbf{x}_e^i) + g(\mathbf{x}_e^i)\mathbf{u}_e^i\Delta t_i, \\
             & i\in \{0,\dots,N-1\}\label{eq.tompc_dy}
         \end{split}
         \\
         & \mathbf{u}_e^i\in\mathcal{U},~i\in \{0,\dots,N-1\} \label{eq.tompc_u}\\
         & \mathbf{x}_e^i\in\mathcal{X},~i\in \{0,\dots,N-1\}\label{eq.tompc_v}\\
         & \Delta t_i\leq t_{max},~i\in \{0,\dots,N-1\}\label{eq.tompc_t}\\
         & x_e^N \geq x_g \label{eq.tompc_goal}\\
         & y_e^N \in [y_g - \epsilon_y, y_g + \epsilon_y ], ~ \epsilon_y > 0 \label{eq.tompc_y}\\
         \begin{split}
             & L_f h_{ef}^t+ L_g h_{ef}^t\mathbf{u}_e^i +\frac{\partial h_{ef}^t}{\partial t} \geq -\Lambda_{ef} H_{ef}^t,\\
             & i\in \{0,\dots,N-1\} \label{eq.tompc_cbf}
         \end{split}
    \end{align}
    \label{eq.tompc}
\end{subequations}
\end{boxD}


Note that when applying MPC-based control, the above optimization problem needs to be solved iteratively based on the current actual state. Specifically, for each time instant, 
only the  first control entry $\mathbf{u}_e^0\in\{\mathbf{u}_e\}^N$ in the computed input sequence is applied. 
The control input will be updated when the next optimization problem is solved.
Then at the start of next iteration, we re-measure the states of $veh_e$ and $veh_f$, and re-compute $\{\mathbf{u}_e\}^N$ by solving TO-CBF-MPC,   still apply its control entry, and so forth. The convergence criteria is met when the first state $\mathbf{x}_e^0$ at the next iteration satisfies the relationships $x_e^0\geq{x_g}$ and $y_e^0\in[y_g - \epsilon_y, y_g + \epsilon_y]$, which are analogue to Equations~\eqref{eq.tompc_goal} and \eqref{eq.tompc_y}.


\begin{Remark}
It is worth remarking that the safe-by-construction property of our solution is ensured by the VL-CBF constraint
and is independent from the objective function in the optimization problem. 
In the proposed TO-CBF-MPC approach, we only consider the overtaking time as the single objective function in order to simplify the setting. However, this is without loss of generality since one can add any additional  user-preferred performance metric in the objective function, which will not affect the overall safety property. 
    For example, if one wants to further limit the changing rate of the acceleration to smooth the driving behavior, then  $\frac{\gamma_{\mathbf{u}}}{N}\cdot\sum||\mathbf{u}_e^{i+1} - \mathbf{u}_e^i||,~i\in\{0,\cdots,N-1\}$ can be added to the objective function with $\gamma_{\mathbf{u}}\geq 0$ as a user-defined weight \cite{althoffProvablycorrectComfortableAdaptive2021}.  
\end{Remark}

\subsection{Integrated Overall Overtaking Framework}
\label{sec.integrated_framework}

With the previously introduced TO-CBF-MPC for motion planning and overtaking time estimation, we now consider the scenario when opposing traffic occurs while we are still borrowing the opposite lane. 
In this case, we need to consider the possible position of $veh_o$ within the time horizon $\{x_o\}^N$ to safely plan our motion via the above TO-CBF-MPC.

Still, we assume that  $veh_o$ is autonomous and follows a (possibly non-identical) VL-CBF control law whose parameters $\Lambda_o$ can be obtained by the ego vehicle $veh_o$ via, for example,  communications.
Then by knowing parameters $\Lambda_o$, the ego vehicle can precisely estimate the control upper bound of the opposing vehicle when driving towards it. 
Therefore, when solving the TO-CBF-MPC problem, the following constraints need to be added into \eqref{eq.tompc} in order to ensure safety with respect to the opposing vehicle

\begin{subequations}
    \begin{align}
        & \mathbf{x}_o^{i+1} = f_o(\mathbf{x}_o^i) + g_o(\mathbf{x}_o^i) \mathbf{u}_o^i \Delta t_i \label{eq.tompcAdd_dynamic_o}\\
        & L_f h_{eo}^t + L_g h_{eo}^t \mathbf{u}_e^i + \frac{\partial h_{eo}^t}{\partial t}\geq -\Lambda_{eo}H^t_{eo} \label{eq.tompcAdd_cbf_eo}\\
        & L_{f_o} h_{oe}^t + L_{g_o} h_{oe}^t \mathbf{u}_o^i + \frac{\partial h_{oe}^t}{\partial t}\geq - \Lambda_{oe}H^t_{eo} \label{eq.tompcAdd_cbf_oe},
    \end{align}
    \label{eq.tompcAdd}
\end{subequations}
\vspace{-6pt}

\noindent for $i\in\{0,\dots,N-1\}$, where \eqref{eq.tompcAdd_dynamic_o} is the dynamic for $veh_o$ in \eqref{eq.double_integrator}. Then, after incorporating the new VL-CBF constraints above, we can jointly get the estimated sequence of $\{\mathbf{x}_e\}^N, \{\mathbf{u}_e\}^N, \{\mathbf{x}_o\}^N, \{\mathbf{u}_o\}^N$ within the prediction horizon, as well as the sequence of intervals $\{\Delta_t\}^N$ between time steps.
Since VL-CBF \eqref{eq.tompcAdd_cbf_oe} regulates the most radical behavior possible of $veh_o$, by solving the re-formulated TO-CBF-MPC problem, the resulting control action is collision-free to both $veh_f$ and $veh_o$ by construction, which is formally stated by the following result.



\begin{Theorem}\upshape
	\label{thm.safe-by-construction}    
 	Suppose $veh_f$ is non-accelerating during the overtaking process, the proposed overtaking control strategy is safe for the ego vehicle as long as the initial feasibility of the incorporated TO-CBF-MPC problem is satisfied.
\end{Theorem}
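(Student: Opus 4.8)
The plan is to argue safe-by-construction by an induction on the MPC iterations, combining (i) the forward-invariance guarantees of the VL-CBF constraints (Theorems~\ref{thm.VL-CBF_levelset} and \ref{thm.safety}), (ii) a recursive-feasibility argument showing that at every instant at least one of the two optimization problems — the overtaking problem \eqref{eq.tompc}+\eqref{eq.tompcAdd} and the return problem — admits a solution, and (iii) the exactness, resp.\ over-approximation, of the predicted trajectories of $veh_f$ and $veh_o$ along which the constraints are imposed.

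First I would handle the base case. By hypothesis the incorporated TO-CBF-MPC problem is feasible at $t=0$, so its first control entry $\mathbf{u}_e^0$ satisfies the VL-CBF constraints \eqref{eq.tompc_cbf} and \eqref{eq.tompcAdd_cbf_eo} at the current state, i.e.\ $\mathbf{u}_e^0\in K_{\text{vlcbf}}(\mathbf{x}_e,t)$ both for $h_{ef}$ and for $h_{eo}$. Applying Theorem~\ref{thm.VL-CBF_levelset} to $h_{ef}$ and Theorem~\ref{thm.safety} to the pair $(h_{eo},h_{oe})$ then yields that, over the sampling interval $[0,\Delta t_0]$, the super-level set $\mathcal{C}^f_\epsilon$ and the relative-safety set with respect to $veh_o$ are invariant, so no collision occurs. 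The non-acceleration assumption on $veh_f$ is exactly what makes this rigorous at the discrete level: since $x_f(t)=x_f(0)+v_f t$ holds \emph{exactly}, the predicted value $h_{ef}^t$ appearing in \eqref{eq.tompc_cbf} coincides with the true $h_{ef}$ along the realized motion, so satisfaction of the sampled constraint genuinely certifies the continuous-time CBF condition, with the inter-sampling error controlled by the step bound \eqref{eq.tompc_t} and the Lipschitz continuity of $f,g$ (the inter-sampling analysis of \cite{xiaoDifferentiableControlBarrier2022}). Symmetrically, because $\Lambda_o$ is known and $veh_o$ obeys its own VL-CBF law, the predicted sequence $\{x_o\}^N$ bounds the reachable positions of $veh_o$, and constraint \eqref{eq.tompcAdd_cbf_oe} encodes the most aggressive admissible behavior of $veh_o$; hence the ego's plan is safe against the true $veh_o$.

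Next I would carry out the induction step. Assume at iteration $k$ one of the two problems is feasible and its first input has been applied, keeping the state in all relevant super-level sets. At iteration $k+1$ I build a candidate from the time-shifted tail of the previously computed trajectory. If the overtaking problem was solved at step $k$, either this shifted tail still reaches the goal region \eqref{eq.tompc_goal}–\eqref{eq.tompc_y}, in which case the overtaking problem remains feasible, or it does not, in which case I invoke the return problem. The crucial point — where Theorem~\ref{thm.safety} does the heavy lifting — is that the return problem always admits at least the \emph{emergency-braking} solution in which $veh_e$ decelerates at $a_l$ and $veh_o$ is assumed to do likewise: by \eqref{eq.VL-CBF_feasibility}, whenever the relative-safety condition $\Lambda_\times H_\times\ge 0$ holds (which is maintained inductively via the invariance of the super-level sets), a collision-free braking maneuver exists within the control limits, and this maneuver steers $veh_e$ back toward $\mathcal{L}_{\text{ego}}$. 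Thus the return problem is feasible whenever the overtaking problem is not, and in either branch the applied control lies in $K_{\text{vlcbf}}$, so the super-level sets stay invariant and the inductive hypothesis is restored for step $k+1$.

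Finally I would assemble the pieces: by induction, for every $t\ge 0$ the state of $veh_e$ lies in $\mathcal{C}^f_\epsilon(t)$ and in the relative-safety set with respect to $veh_o$, so $veh_e$ never collides with $veh_f$ or $veh_o$, and at every instant a feasible safe control is available, so the closed loop is well defined for all time. I expect the main obstacle to be the recursive-feasibility step — in particular, rigorously justifying the hand-off between the overtaking and return problems and verifying that the shifted-tail / emergency-brake candidate simultaneously meets all constraints of \eqref{eq.tompc}+\eqref{eq.tompcAdd} (dynamics \eqref{eq.tompc_dy}, input and state limits, the maximum-step bound \eqref{eq.tompc_t}, the terminal condition of the return problem, and all three VL-CBF constraints). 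A secondary subtlety is quantifying the discrete-to-continuous gap so that the sampled CBF inequalities imply forward invariance in continuous time, for which \eqref{eq.tompc_t} together with Lipschitz continuity is the tool.
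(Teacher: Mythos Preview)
Your proposal conflates Theorem~\ref{thm.safe-by-construction} with the later Theorem~\ref{thm.one_in_two_feasibility}. At the point where Theorem~\ref{thm.safe-by-construction} is stated, the ``proposed overtaking control strategy'' is the iteration of the single problem $Q_1$ (TO-CBF-MPC \eqref{eq.tompc} with the added constraints \eqref{eq.tompcAdd}); the dual structure with the return problem $Q_2$ is introduced only afterwards. Your induction step explicitly relies on a hand-off to $Q_2$ (``or it does not, in which case I invoke the return problem'') and on an emergency-braking candidate, neither of which is part of the strategy being analyzed here. Consequently you never establish the thing the theorem actually needs: recursive feasibility of $Q_1$ itself.

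The paper's argument is different in spirit and much shorter. It is a monotonicity argument in the relative frame: if $veh_f$ is non-accelerating, then at every re-solve the relative longitudinal gap and relative velocity between $veh_e$ and $veh_f$ are no worse than what was assumed when the initial plan was computed, so the actual time required to reach the goal is bounded by the initially optimized $\sum_i\Delta t_i$. Hence the binding step-time constraint \eqref{eq.tompc_t} remains satisfiable, and in fact a control input no larger than the one originally planned suffices; the previously computed solution space still contains a feasible point for $Q_1$. Safety then follows directly from the VL-CBF constraints \eqref{eq.tompc_cbf} and \eqref{eq.tompcAdd_cbf_eo}--\eqref{eq.tompcAdd_cbf_oe} via Theorems~\ref{thm.VL-CBF_levelset} and \ref{thm.safety}. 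Your use of the non-acceleration hypothesis---to make the predicted $h_{ef}^t$ coincide with the true one---is a side observation; the role non-acceleration plays in the paper's proof is to guarantee that the \emph{time} constraint \eqref{eq.tompc_t}, and hence $Q_1$, stays feasible, which your proposal never exploits.
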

\proof
The initial feasibility of the above optimization problem ensures the solution exists under current velocity of $veh_f$, and the potential autonomous opposing traffic, and therefore, the safety is always preserved under the longitudinal and ellipse-shaped VL-CBFs.
And since $veh_f$ is non-accelerating, the actual time taken to overtake the $veh_f$ ($\sum\Delta\overline{t}_i$) will be less than or equal to the planned overtaking time $\sum \Delta t_i$, as we consider the relative position and velocity w.r.t. $veh_e$. Thus, the overtaking can be completed with less time taken, suggesting that the constraint in \eqref{eq.tompc_t} will always be satisfied, and the system can accept smaller control input $\mathbf{u}$ that being contained in the original solution space. This completes the proof.
\endproof

\begin{Remark}
The developments of our framework are based on the   assumption that the strategy of the opposing vehicle is known to the ego vehicle. 
When this assumption does not hold, our framework can still be applied by further leveraging some techniques from the literature. 
One possible approach  is to  utilize the   \emph{parameter learning by interaction} technique in [43] that learns the   driving parameters $\Lambda_o$ through interactions on-the-fly.  Another approach  is to use the \emph{conservative analysis} technique  to estimate the worst-case of  $veh_o$ during the process. 
This approach is  more general as it can be even applied to any possible control strategy for $veh_o$, but consequently, it also is more conservative.  Details on how this conservative analysis can be integrated into our framework is provided in Section~V.C.   
\end{Remark}

\begin{figure*}
    \newcommand{\colwidth}{3.8cm}
	\renewcommand{\tabcolsep}{0.4mm}
    \centering
    \includegraphics[width = \linewidth]{./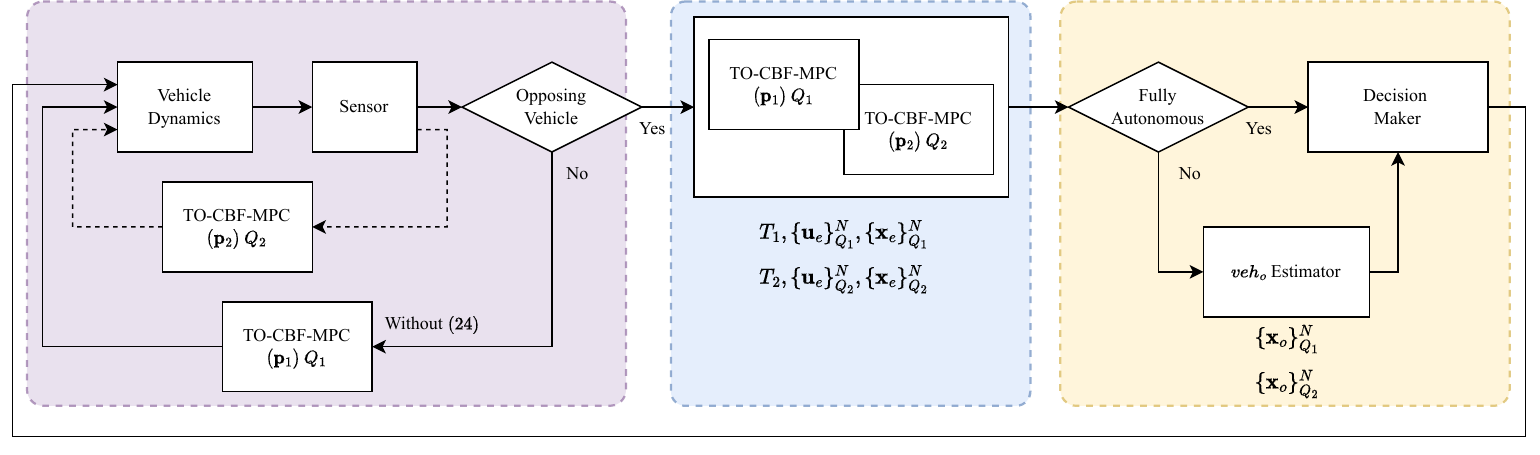}
    \caption{The two-way road overtaking framework considering  opposing traffic. The framework   consists of three parts (emphasized in the dashed color blocks), 
    where  
    the left purple block depicts the control diagram when no opposing vehicle is around, 
    the middle blue block depicts the \emph{dual} TO-CBF-MPC mechanism for forward and backward planning, 
    and the right yellow block depicts the switch to our decision maker for coping with human driver-based opposing vehicles.}
    \label{fig.overall_framework}
\end{figure*}


The formal safety guarantee in Theorem~\ref{thm.safe-by-construction} is based on the assumption that $veh_f$ is cooperative in the sense that it is non-accelerating during the overtaking process. In practice, if $veh_f$ is also accelerating, then overtaking task may not be feasible. In this scenario, the ego vehicle needs a mechanism to recognize this issue and to take defensive actions such as stopping the overtaking process. 

To address the above issue, we introduce a \emph{dual TO-CBF-MPC} structure which provides ego vehicle the option to back to $\mathcal{L}_{\text{ego}}$ when overtaking is not feasible.  Our integrated two-way road overtaking strategy with dual TO-CBF-MPC structure is illustrated in Figure~\ref{fig.overtaking_diagram2}. Specifically, instead of solving a safe overtaking control problem, we further solve a safe control problem, where the goal is to return back to the front vehicle. Therefore,  two different reference goal locations $\mathbf{p}_1 = [x_g^1,y_g^1]$ and $\mathbf{p}_2 = [x_g^2,y_g^2]$ are  selected, one in front of $veh_f$ and the other behind $veh_f$, which correspond to two TO-CBF-MPC problems (with different $[x_g,y_g]$); we term the MPC problem from $[x_e,y_e]$ to $[x_g^1, y_g^1]$ as $Q_1$, and to $[x_g^2,y_g^2]$ as $Q_2$. To put it simply, $Q_1$ plans the actions to overtake $veh_f$, while $Q_2$ serves as a back-up strategy for this safety-critical system to abandon the overtaking process when the feasibility of overtaking is at risk.

\begin{figure} 
    \centering
    \includegraphics[width = 0.9\linewidth]{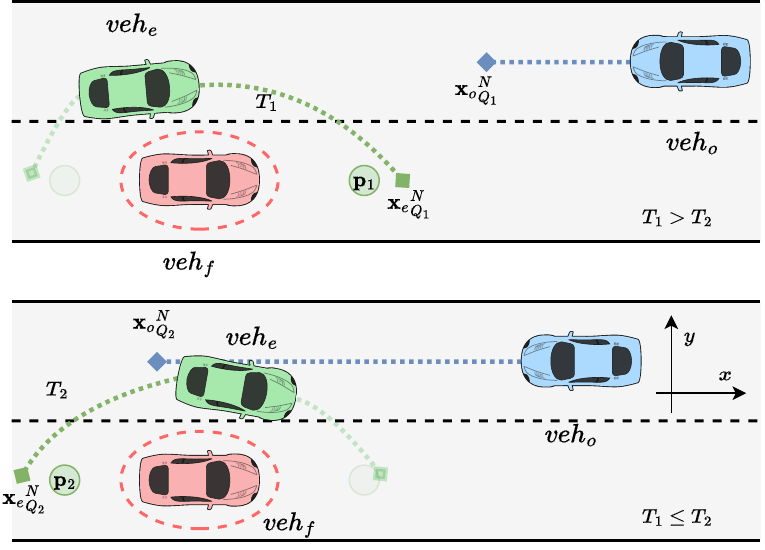}
    \caption{Illustration of the two-way road overtaking strategy. The upper figure shows the prediction process of the TO-CBF-MPC problem $Q_1$, where the green dotted line represents the estimated $N$ points of $veh_e$'s trajectory within the time $T_1$ to reach $\mathbf{p}_1$ and the blue dotted line represents the estimated $veh_o$'s trajectory during this time. 
    The lower figure shows the process of solving TO-CBF-MPC problem $Q_2$ when $Q_1$ is infeasible.}
    \label{fig.overtaking_diagram2}
\end{figure}

At the beginning of the overtaking maneuver, the ego vehicle $veh_e$ detects initial feasibility of $Q_1$, i.e., whether there is potential conflict when placing the reference goal location $\mathbf{p}_1$, and whether there exist opposing traffic that may cause unexpected danger (violates the initial conditions in Theorem~\ref{thm.safety}). If the initial feasibility condition is fulfilled, then the overtaking maneuver begins. The proposed overtaking framework is illustrated in Figure~\ref{fig.overall_framework}. It starts by continuously detecting whether there exists opposing vehicle or not. If no opposing vehicle occurs, then it iterates through $Q_1$ by applying $\mathbf{u}^0$ at each iteration until convergence. After moving out of the current lane, $Q_2$ becomes active, with $\mathbf{p}_2$  as reference goal location relative to $veh_f$ at its back. When $Q_1$ is unsolvable, e.g., due to the acceleration of $veh_f$, the system shifts toward the loop of $Q_2$, shown in the dashed-loop in Figure~\ref{fig.overall_framework}, and  it will be guided back into its original lane. 
The following theorem shows that our framework ensures at least one MPC problem has a solution at each instant.

\begin{Theorem}\upshape
    \label{thm.one_in_two_feasibility} 
    By adapting the integrated overtaking framework with the dual TO-CBF-MPC structure shown in  Figure~\ref{fig.overall_framework},  at least one of the aforementioned TO-CBF-MPC problems $Q_1,Q_2$ is feasible during the overtaking process  
    regardless of $veh_f$ is accelerating or decelerating, but not interchangeably. 
\end{Theorem}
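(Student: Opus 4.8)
The plan is to prove the statement by induction over the MPC iterations, using the two behavioural modes of $veh_f$ that the hypothesis allows. For the base case, recall that by construction the overtaking maneuver is launched only after the initial feasibility of $Q_1$ has been verified, so at the first instant at least one of $Q_1,Q_2$ is feasible. By the ``not interchangeably'' hypothesis, $veh_f$ either stays non-accelerating throughout, or stays accelerating throughout (more generally, its mode switches at most once, and the segment before the switch falls under the first case and the segment after under the second), so it suffices to treat these two regimes separately.

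\emph{Regime 1: $veh_f$ non-accelerating.} Here I would establish recursive feasibility of $Q_1$ in the spirit of Theorem~\ref{thm.safe-by-construction}. Suppose $Q_1$ is feasible at the current instant with optimal data $\{\mathbf{x}_e\}^N,\{\mathbf{u}_e\}^N,\{\Delta t\}^N$. I would exhibit a feasible candidate for $Q_1$ at the next instant by discarding the applied first stage and, if needed, appending a terminal hold at $\mathbf p_1$: it satisfies \eqref{eq.tompc_dy}--\eqref{eq.tompc_v} since it uses admissible inputs; it satisfies the per-stage cap \eqref{eq.tompc_t} because, $veh_f$ not accelerating, the actual overtaking time is no larger than the planned one; it satisfies the terminal constraints \eqref{eq.tompc_goal}--\eqref{eq.tompc_y} and the ellipse VL-CBF \eqref{eq.tompc_cbf} because the relative longitudinal gap to $veh_f$ evolves no worse than predicted; and it satisfies the opposing-vehicle VL-CBFs \eqref{eq.tompcAdd_cbf_eo}--\eqref{eq.tompcAdd_cbf_oe} because, by Theorem~\ref{thm.safety}, $\mathcal H\coloneqq\min\{\Lambda_\times H_\times\}$ remains non-negative under the assumed control laws, so the combined bound \eqref{eq.VL-CBF_combined} stays satisfiable. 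Hence $Q_1$ is feasible at every instant of this regime.

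\emph{Regime 2: $veh_f$ accelerating.} If $Q_1$ stays feasible the claim is immediate; otherwise let $\hat t$ be the first instant at which $Q_1$ is infeasible. At the preceding instant $Q_1$ was feasible, so the forward invariance of $\mathcal C_\epsilon^f$ (Theorem~\ref{thm.VL-CBF_levelset}) gives $h_{ef}\ge\epsilon$ at the current state, and, if opposing traffic is present, the invariance condition of Theorem~\ref{thm.safety} holds as well. If the ego has not yet left $\mathcal L_{\text{ego}}$, it is essentially already at the behind-$veh_f$ target and $Q_2$ is trivially feasible; otherwise I would construct an explicit fall-back candidate for $Q_2$ in which the ego applies a bounded deceleration longitudinally while steering toward the centre of $\mathcal L_{\text{ego}}$ behind $veh_f$, and verify its constraints: admissible inputs give \eqref{eq.tompc_dy}--\eqref{eq.tompc_v}; because $veh_f$ accelerates forward while the ego decelerates, the ego's position relative to $veh_f$ moves monotonically backward, so $\mathbf p_2$ is reached within finite time, hence within $N$ stages provided $N$ is large relative to $t_{max}$, giving \eqref{eq.tompc_goal}--\eqref{eq.tompc_y} and \eqref{eq.tompc_t}; the ego--$veh_f$ distance is non-decreasing along the fall-back so $h_{ef}$ stays above $\epsilon$ and \eqref{eq.tompc_cbf} holds; and since decelerating decreases $v_e^x-v_o^x$ it only relaxes \eqref{eq.VL-CBF_combined}, so \eqref{eq.tompcAdd_cbf_eo}--\eqref{eq.tompcAdd_cbf_oe} remain feasible by Theorem~\ref{thm.safety}. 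Thus $Q_2$ is feasible at $\hat t$, and repeating the fall-back argument shows $Q_2$ stays feasible at every subsequent instant until the ego is back in $\mathcal L_{\text{ego}}$ and $Q_2$ converges.

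Combining the base case with the two regimes completes the induction, so at every instant of the overtaking process at least one of $Q_1,Q_2$ is feasible. I expect the main obstacle to be the $Q_2$ construction in Regime~2: certifying that the \emph{discretized} $N$-stage problem, not merely the underlying continuous-time fall-back maneuver, admits a feasible point that simultaneously meets the lateral and longitudinal terminal targets, the per-stage cap \eqref{eq.tompc_t}, and both VL-CBFs; this implicitly requires $N$ to be chosen large enough and careful bookkeeping of the intervals $\Delta t_i$. A secondary delicacy is that opposing-vehicle feasibility during a return maneuver must be inherited from the forward invariance in Theorem~\ref{thm.safety} rather than re-derived, since $h_{eo}$ in \eqref{eq.two_way_value_function_e} contains no lateral term.
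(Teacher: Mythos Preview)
Your approach is genuinely different from the paper's, and the difference is instructive.

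The paper's proof does not try to construct explicit feasible candidates at each instant. Instead it exploits a single structural observation: $\mathbf{p}_1$ and $\mathbf{p}_2$ lie on opposite sides of $veh_f$ at a fixed longitudinal offset, so when $veh_f$ accelerates and pulls $\mathbf{p}_1$ away from the ego (lengthening the distance $\|x_e-x_g^1\|$ and driving $T_1$ above $N\,t_{max}$), it simultaneously drags $\mathbf{p}_2$ \emph{toward} the ego (shortening $\|x_e-x_g^2\|$) and enlarges the favourable relative velocity for the return. Hence the current return time $T_2$ is bounded by the \emph{initially} estimated return time $\hat T_2$, and the paper takes as part of ``initial feasibility'' that both $Q_1$ and $Q_2$ are feasible at launch, so $\hat T_2\le N\,t_{max}$ and therefore $T_2\le N\,t_{max}$. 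The whole argument is a one-line time comparison, after first reducing feasibility of \eqref{eq.tompc} to the single binding constraint \eqref{eq.tompc_t}.

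Your route---recursive feasibility via shift-and-append in Regime~1, and an explicit deceleration fall-back in Regime~2---is more operational and checks all constraints rather than collapsing them to \eqref{eq.tompc_t}. That buys you something: you do not need the paper's somewhat casual claim that \eqref{eq.tompc_t} is the only constraint that can break. But it costs you exactly the difficulty you flag: certifying a discretized $N$-stage feasible point for $Q_2$ that meets the terminal, time-cap, and both VL-CBF constraints simultaneously is delicate, whereas the paper sidesteps this entirely by comparing $T_2$ to $\hat T_2$. Note also that your base case uses only initial feasibility of $Q_1$; the paper's argument needs (and assumes) initial feasibility of $Q_2$ as well, since $\hat T_2\le N\,t_{max}$ is precisely what closes the loop. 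If you adopt that stronger initial hypothesis, the paper's duality observation would let you replace your Regime~2 construction with a much shorter time-estimate argument.
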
 
\proof
The start of the overtaking process suggests the satisfaction of the initial feasibility, which states that there exists solution to problem $Q_1$ and $Q_2$ so that $\mathbf{x}_e$ to $\mathbf{p}_1$ and $\mathbf{p}_2$ are feasible. For overtaking problem $Q_1$, the in-feasibility of \eqref{eq.tompc} can only be affected by (\ref{eq.tompc_t}), i.e., when $T_1 > N\cdot t_{max}$. This corresponds to the case that $veh_f$ accelerates when $veh_e$ overtaking, which means the relative position and velocity of $veh_f$ is beyond our initial measurements. We denote the predicted longitudinal distance between $\mathbf{x}_e$ and $\mathbf{p}_1,\mathbf{p}_2$ as $l_x^1,l_x^2$ respectively, then the violation of (\ref{eq.tompc_t}) suggests that $\|x_e - x_g^1\|>l_x^1$.  Since $\mathbf{p}_1,\mathbf{p}_2$ is placed relative to the $\mathbf{x}_f$, it suggests $\|x_g^1- x_g^2\| = c$, where $c$ is a constant. Thus, for $\mathbf{x}_e$ to $\mathbf{p}_2$ we have $\|x_e - x_g^2\|<l_x^2$, resulting in a smaller relative distance in longitudinal direction; and since $veh_f$ is accelerating, the relative velocity (acceleration) between $veh_e$ and $veh_f$ is also greater for $Q_2$, therefore, with shorter distance and larger acceleration gap, $T_2 < \hat{T}_2$, where $\hat{T}_2$ is the initially estimated time for $Q_2$. Since the satisfaction of initial feasibility suggests $\hat{T}_2\leq N\cdot t_{max}$, which leads to $T_2\leq N\cdot t_{max}$ that guarantees the feasibility of $Q_2$. Conversely, if $T_1 \leq N\cdot t_{max}$, solution from $\mathbf{x}_e$ to $\mathbf{p}_1$ always exists. This completes the proof.
\endproof

The feasibility of $Q_1$ suggests that the path and control sequence planned at the current position can complete the overtaking action without collision. Whereas the feasibility of $Q_2$ suggests that there exist path and control sequence such that the ego vehicle can return back to its original lane without collision. When the $veh_f$ is non-accelerating, we can ensure that $Q_1$ is feasible, and the overtaking action can be completed; when the feasibility of $Q_1$ is challenged due to the acceleration of $veh_f$, we have $Q_2$ as our formal safety solution that can lead the ego vehicle back to its original lane safely.

\begin{Remark}
The optimization problem $Q_2$ in the proposed dual TO-CBF-MPC framework also serves as a fault-tolerant mechanism that further ensures the safety of the synthesized control strategy under non-cooperative environments. 
That is, under the extended possible accelerating behavior of $veh_f$ making $Q_1$ unsolvable, our framework can enable the ego vehicle to turn back to its original position in $\mathcal{L}_{\text{ego}}$ safely. 
 Overall,  Theorems~\ref{thm.safe-by-construction} and \ref{thm.one_in_two_feasibility}  show that our integrated framework can provide formal guarantees for general on-road overtaking problems under the potential opposing traffic. 
\end{Remark}

\begin{Remark}
Note that,  the assumption that $veh_f$ is accelerating or decelerating but not interchangeably is essential for the correctness of our method. Without this assumption, in fact, no safe control law exists. For example,  suppose that, when the ego vehicle is trying to overtake the front vehicle in the adjacent lane,  the front vehicle takes aggressive actions by exactly imitating the behavior of the ego vehicle. That is, it accelerates as the ego vehicle accelerates, and decelerates as the ego vehicle decelerates.  Then in this case, the  ego vehicle can neither overtake nor merge back to the safe region.  However, this scenario is usually considered as ``non-rational" or even illegal in our daily life.  Particularly,  as pointed out by the well-known Responsibility-Sensitive Safety (RSS) model in autonomous driving  \cite{shalev-shwartzFormalModelSafe2017}, one vehicle must obey the RSS rules in order to not be the ``cause'' of an accident,  and this scenario is essentially a violation of the RSS rules. 
\end{Remark}


\subsection{Incorporating  Semi-Autonomous Vehicles}

In Section~\ref{sec.integrated_framework}, we have established a formal safety guarantee for the proposed dual TO-CBF-MPC framework under the assumption that the 
opposing vehicle is also autonomous. 
In this subsection, we further discuss how the formal safety guarantee can be carried to the case when $veh_o$ is semi-autonomous (non-cooperative) by conservative analysis.


Since the behaviors of human driver are usually irregular, we set the maximum acceleration and speed limit to the semi-autonomous $veh_o$ to get the over-approximated prediction horizon $\{\mathbf{x}_o\}^N$ based on our real-time measurement of its states at every time step. Then the decision making process are generally depending on three key status: time taken to complete $Q_1,Q_2$; the final planned state of $veh_e$, ${x_e}_{Q_i}^N,~i\in\{1,2\}$; and the final predicted state of $veh_o$, ${x_o}_{Q_i}^N,~i\in\{1,2\}$;
where $\{\times\}_{Q_i}^N, i\in\{1,2\}$ represents the predicted sequence of $N$ corresponding states (or controls) in the curly brackets by solving the TO-CBF-MPC problem of $Q_i$; $\times_{Q_i}^k$ denotes the $k^{th}$ entry of the corresponding sequence.

As illustrated in Figure~\ref{fig.overtaking_diagram2}, when $T_1 > T_2$, we only care about the safety of the final state while overtaking, since the time takes to merge back to $\mathbf{p}_2$ is smaller than that of overtaking to $\mathbf{p}_1$, we can always merge back, as long as we can overtake. When $T_1 \leq T_2$, we have to consider whether there exists solution to $Q_2$ if overtaking become infeasible (can be caused by the approaching behavior of $veh_o$). Since merging back is not time-critical, the priority of safety control becomes ensuring the feasibility of $Q_2$; thus, when the feasibility of $Q_2$ is at risk, i.e., ${x_o}_{Q_2}^N$ is approaching $\mathbf{p}_2$, we abandon the overtaking process and apply the control planned by $Q_2$. Note that, if the feasibility of $Q_2$ is at risk when $veh_e$ already cut into $\mathcal{L}_{\text{ego}}$ in front of $veh_f$, we will keep finishing the overtaking process. However, under all cases, whenever the feasibility of $Q_1$ is being challenged, we merge back. Our decision maker is briefly summarized in Table \ref{tab.cons_for_overtaking}.
\begin{table} [!ht]
    \centering
    \arrayrulecolor{black}
    \caption{Conditions for Overtaking}
    \label{tab.cons_for_overtaking}
    \begin{adjustbox}{width = 1\linewidth}
    \begin{tabular}{|cccc|c|c|}
        \hline
        \multicolumn{4}{|c|}{Conditions}&\multirow{2}{*}{Overtake}&\multirow{2}{*}{Go Back}\\
        \cline{2-3}
        $T_1>T_2$&{\tiny ${x_o}_{Q_1}^N > {x_e}_{Q_1}^N$} & {\tiny ${x_o}_{Q_2}^N > x_g^2$} &$y_e\in\mathcal{L}_{\text{ego}}$& & \\
        \hline 
        $\bullet $&$\bullet $& & &$\surd $&\\
        \hline
        $\circ $&$\bullet $& &$\bullet $&$\surd $&\\
        \hline
        $\circ $&$\bullet $& &$\circ $& &$\surd $\\
        \hline
         &$\bullet$& $\bullet$ & &$\surd$ & \\ 
        \hline
        &$\circ $& & & &$\surd $\\
        \hline
    \end{tabular}
    \end{adjustbox}
    $\bullet$: satisfies, $\circ$: violates, $else$: not care.
\end{table}

In Table \ref{tab.cons_for_overtaking}, the first column states the condition that whether the estimated time for TO-CBF-MPC problem $Q_1$ is greater than that of $Q_2$; the second column represents whether the predicted final position of $veh_o$ will collide with the planned final position of $veh_f$; the third column checks whether the merging back operation will be safe when $T_1\leq T_2$; and the fourth column checks if the ego vehicle is already in $\mathcal{L}_{\text{ego}}$ at the moment. The details of our framework is presented in Algorithm \ref{alg.framework}.

\begin{algorithm} 
    \caption{Two-way Road Overtaking Framework}
    \label{alg.framework}
    \KwIn{$N, \mathbf{p}_1, \mathbf{p}_2$}
    \KwOut{$T_1,T_2, \{\mathbf{u}_e\}_{Q_1}^N, \{\mathbf{u}_e\}_{Q_2}^N$}
    overtaking = True\\
    $veh_o$ = False\\
    
    \Do{$Q_1$ \textbf{not} converged \textbf{and} overtaking == 
    True}{
        get relative position of $veh_f$ w.r.t. $veh_e$\\
        set virtual points $\mathbf{p}_1$ and $\mathbf{p}_2$ according to $veh_f$\\
        detect $veh_o$\\
        \If{$veh_o$ == True}{
            get measurements $\mathbf{x}_o$\\
            Add VL-CBF constraint \eqref{eq.tompcAdd}\\
            compute $T_1, \{\mathbf{x}_e\}_{Q_1}^N, \{\mathbf{u}_e\}_{Q_1}^N$ via $Q_1$\\
            compute $T_2, \{\mathbf{x}_e\}_{Q_2}^N, \{\mathbf{u}_e\}_{Q_2}^N$ via $Q_2$\\
            \If{$veh_o$ Semi-autonomous}{
                Estimate worst case $\{\mathbf{x}_o\}_{Q_1}^N$, $\{\mathbf{x}_o\}_{Q_2}^N$\\
                overtaking $\leftarrow$ decision from Table \ref{tab.cons_for_overtaking}\\
                \If{overtaking == False}{
                    break\\
                }
            }
            \Else{
                \If{$Q_1$ infeasible}{
                    break\\
                }
            }
        }
        apply ${\mathbf{u}_e}_{Q_1}^1$ via $Q_1$\\
        
    }
    \Do{ $Q_2$ \textbf{not} converged \textbf{and} overtaking == False}{
        apply ${\mathbf{u}_e}_{Q_2}^1$ via $Q_2$\\
    }
\end{algorithm}

\smallskip
\begin{Remark}
    As the above section suggests, our framework can also be applied to the scenario when the opposing vehicle is semi-autonomous or human-driver-based. In this case, we use the conservative analysis by assuming the maximum acceleration rate and the speed limit of the $veh_o$ in the prediction horizon and using the decision table to aid the decision-making process. Although this setting will increase the conservativeness under a semi-autonomous scenario, it provides formal guarantees of safety under the worst case. 
\end{Remark}


\section{Simulation Results}
\label{sec.sims}

Having illustrated the proposed two-way road overtaking framework in the previous sections, we now demonstrate our algorithm through simulations. Note that, the color convention for vehicles in the simulations are inherited from Section~\ref{sec.problem}.

\subsection{Simulation Setups}
\label{sec.sim_setups}
In our simulations, the dimension and physical parameters of all vehicles involved are adapted from Toyota Camry, with length of 4.885m and width of 1.840m. The lane width is scaled according to averaged global standard as 3.5m. Horizon length of the predicted steps is $N = 50$, and maximum time resolution is $10^{-4}s$. The sampling rate towards the $veh_o$ is $50Hz$. 
To further demonstrate the robustness of our method against uncertainties, random disturbances $\boldsymbol{\mu} = [\mu_x,\mu_y,\mu_v]\in\mathbb{R}^3$ are injected into the perception module, i.e., the state information used in the optimization problem are the actual states with noises.  
Specifically, the perception noises are of uniform distribution such that: for the position information, we have 
$\mu_x,\mu_y\in[-0.5,0.5]$ when the distance with the measured objects is greater than 2m, and $\mu_x,\mu_y\in[-0.1,0.1]$ otherwise; and for velocity information, we have  $\mu_v$ in the $\pm 10\%$ range of the actual velocity.


The TO-CBF-MPC optimization problem   in  Equation~\eqref{eq.tompc} is implemented in \texttt{Python} with \texttt{CasADi} \cite{anderssonCasADiSoftwareFramework2019} as modeling language, and  is  solved via IPOPT  method \cite{bieglerLargescaleNonlinearProgramming2009} using a single performance core of the M1 Pro ARM processor.

\subsection{Overtaking with No Opposing Vehicle}

As illustrated in Figure~\ref{fig.sims1}, when there is no opposing traffic, the overall problem becomes a one-way road overtaking problem. During the overtaking process, the front vehicle $veh_f$ can vary its speed, but it will adopt the rational driving behavior as stated in Section~\ref{sec.problem}. The reference goal positions $\mathbf{p}_1$ and $\mathbf{p}_2$ are displayed as the faded green rectangles in front of and behind the $veh_f$. The dashed ellipse around $veh_f$ is the safety boundary enforced by the CBF based on the relative position between $veh_e$ and $veh_f$. 
\begin{figure}[!ht]
    \centering
    \includegraphics[width = 1\linewidth]{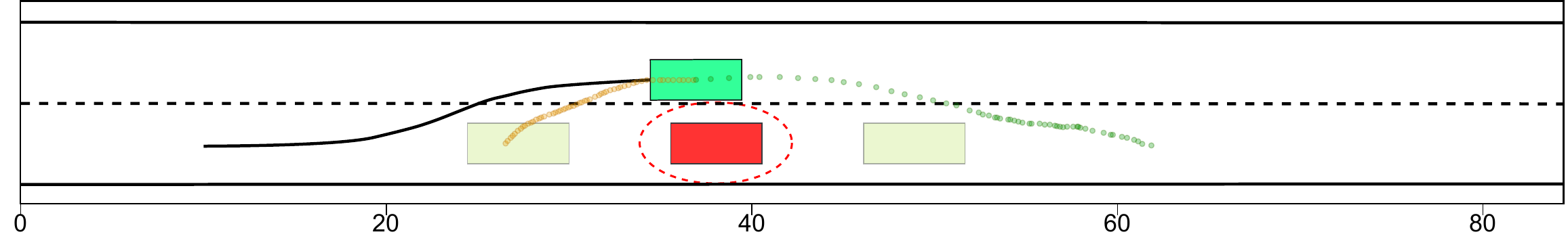}
    \caption{Overtaking with no opposing vehicle with constant velocity of the front vehicle at 8.3m/s (30km/h), and the velocity of the ego vehicle  ranges from 10-19.4m/s (36-70km/h).}
    \label{fig.sims1}
\end{figure}

 In the simulation figures, the green and orange dotted circular trajectories represent the planned trajectory from $\mathbf{x}_e$ to $\mathbf{p}_1$ and $\mathbf{p}_2$ respectively derived through $Q_1$ and $Q_2$. The solid black line at the back of $veh_e$ shows the actual trajectory traveled by the ego vehicle.

\begin{table}[!ht]
    \centering
    \arrayrulecolor{black}
    \caption{Time taken when no opposing traffic}
    \label{tab.sim1_time}
    \begin{tabular}{|c|>{\color{black}}c|>{\color{black}}c|>{\color{black}}c|>{\color{black}}c|}
        \hline
         speed range of $veh_e$ & \multicolumn{4}{c}{\color{black}36-70km/h}\vline\\
        \hline
         {\color{black}speed of $veh_f$}& 25km/h & 28km/h & 32km/h & 36km/h\\
         \hline
         \hline
         min (s)& 4.13 & 4.83 & 4.99 & 5.29\\
         \hline
         max (s)& 5.27 & 5.94 & 6.44 &6.63\\
         \hline
         mean (s)& 4.64 & 5.02 & 5.59 &5.90\\
        \hline
    \end{tabular}
\end{table}

The time taken to overtake the front vehicle with different velocity configurations $v_f$ are summarized in Table~\ref{tab.sim1_time}. For each group, 30 cases were simulated. The ego vehicle starts from the position at 10m at 10m/s (36km/h), with an upper speed limit at 19.4m/s (70km/h). The front vehicle starts at 64m. Note that, to cope with the introduced disturbances, the level set in the VL-CBF was set at $\epsilon = \tilde{\kappa}^{-1}(-\lambda_0) = 0.3$.

\subsection{Overtaking with Opposing Vehicles}

For the case with opposing vehicle during the overtaking process, the following different scenarios are simulated to demonstrate the effectiveness of our overall framework:
    \begin{itemize}[leftmargin=*] 
        \item $veh_f$ is non-accelerating and $veh_o$ is autonomous with a known VL-CBF strategy; 
        \item $veh_f$ is non-accelerating and $veh_o$ is human driver-based  with an unknown strategy; 
        \item $veh_f$ is accelerating  and $veh_o$ is autonomous with a known VL-CBF strategy.
    \end{itemize}
    All simulations are conducted with prediction horizon $N = 50$ and with perception uncertainties introduced in Section~\ref{sec.sim_setups}.

\begin{figure}[!ht]
    \centering
    \includegraphics[width = 1\linewidth]{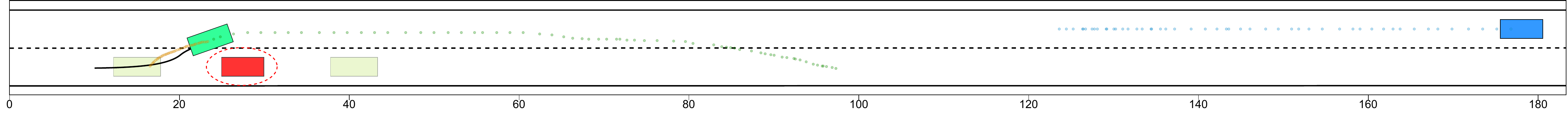}\medskip\\
    \includegraphics[width = 1\linewidth]{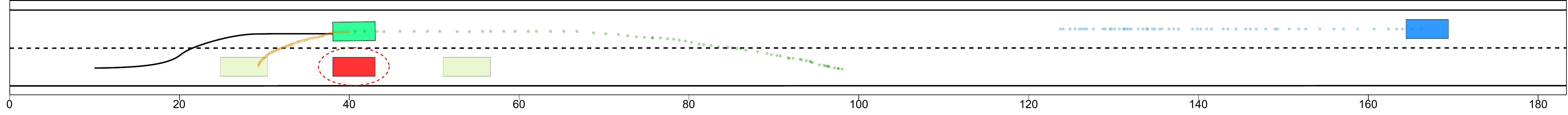}\medskip\\
    \includegraphics[width = 1\linewidth]{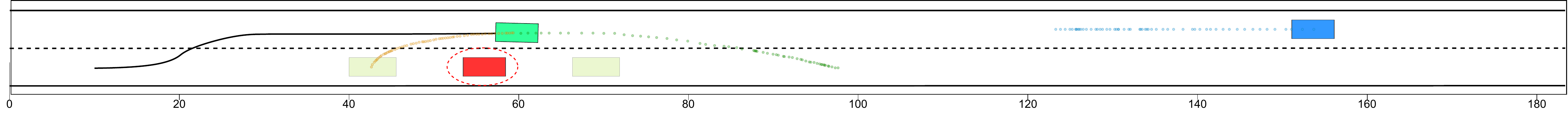}\medskip\\
    \includegraphics[width = 1\linewidth]{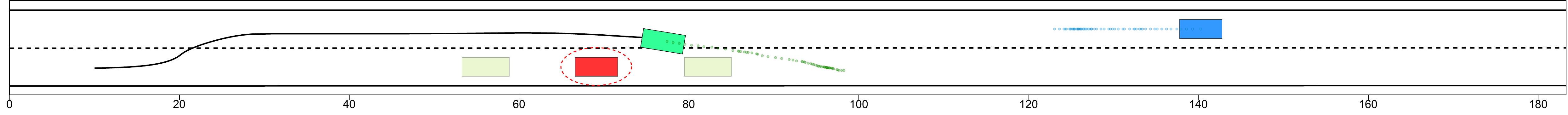}\medskip\\
    \includegraphics[width = 1\linewidth]{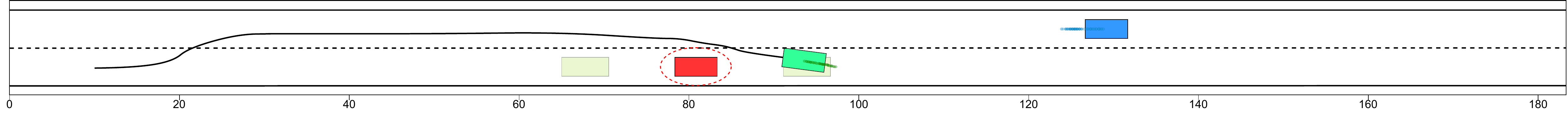}
    \caption{Snapshots from the simulation of the overtaking framework, with $veh_f$ non-accelerating and $veh_o$ fully autonomous.}
    \label{fig.sims2_1}
\end{figure}

Case 1:  $veh_f$ is non-accelerating and $veh_o$ is autonomous with a known VL-CBF strategy.

For this case, Figure~\ref{fig.sims2_1} demonstrates a successful overtaking under the fully autonomous opposing vehicle with VL-CBF as safety control strategy, where the dimmed and scattered dots are the planned trajectories with $N = 50$, whereas the black solid lines show the actual trajectory the ego vehicle has traveled. The simulation result shows that, when both $veh_f$ and $veh_o$ follow the VL-CBF strategy,  the overall overtaking process is not only safe but also very close to the initially planned trajectories even with perception errors.

Case 2:  $veh_f$ is non-accelerating and $veh_o$ is human driver-based  with an unknown strategy.

In Figure~\ref{fig.sims2_2},   we demonstrate a scenario for this setting, where $veh_o$ is human-driver-based and is assumed to be accelerating towards the speed limit at $\alpha_x^o = 1.6m/s$. In this case, the decision rules in Table~\ref{tab.cons_for_overtaking} is active. As shown in this figure, when $veh_o$ accelerates, its estimated trajectory (blue dots) grew towards the ego vehicle. When ${x_o}_{Q_1}^N \leq {x_e}_{Q_1}^N$, it triggered the  ``Goes Back" condition, and thus, $veh_e$  abandoned the overtaking action.
\begin{figure}[!ht]
    \centering
    \includegraphics[width = 1\linewidth]{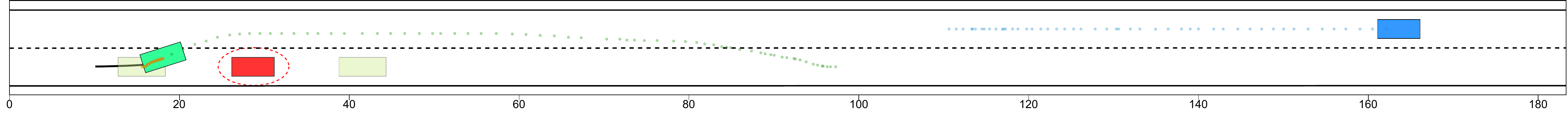}\medskip\\
    \includegraphics[width = 1\linewidth]{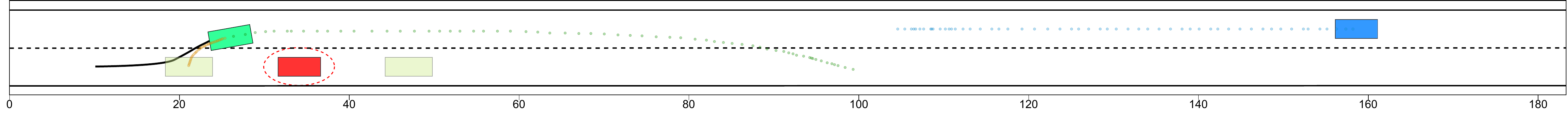}\medskip\\
    \includegraphics[width = 1\linewidth]{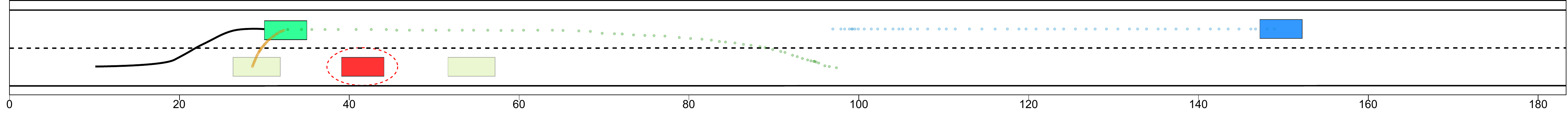}\medskip\\
    \includegraphics[width = 1\linewidth]{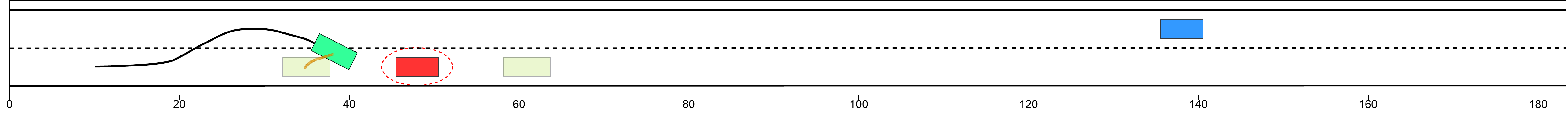}\medskip\\
    \includegraphics[width = 1\linewidth]{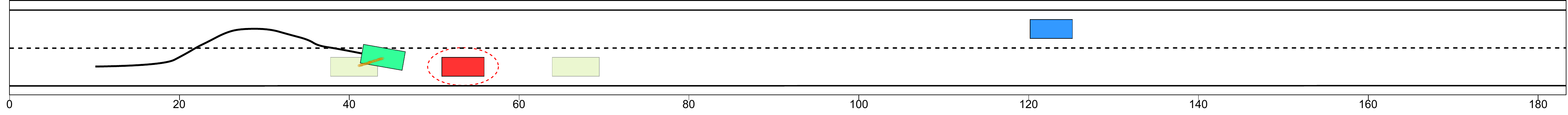}
    \caption{Snapshots from the simulation of the overtaking framework, with $veh_f$ non-accelerating and $veh_o$ human-driver-based and accelerating.}
    \label{fig.sims2_2}
\end{figure}

Case  3: $veh_f$ is accelerating  and $veh_o$ is autonomous with a known VL-CBF strategy.

Note that, although we assumed that the front vehicle is non-accelerating during the overtaking process, our method can still ensure safety by at least returning back to the original position to safely give up the overtaking process, when the front vehicle is accelerating but not interchangeably.  
To see this, we consider three different ranges of accelerating rates for $veh_{f}$. 
For each range, we conduct 30 simulations, where the actual   accelerating rate is picked randomly within the range. 
In  Table~\ref{tab.different_throttles}, for each range, we provide the minimum distance with other vehicles as well as the number of successful overtakings.  Based on the simulation results, we can see that, as the accelerating rate of $veh_f$ increases, the ego vehicle will have less chance to accomplish the overtaking task. 
Nevertheless, the ego vehicle  can still  maintain a safe distance outside the predefined ellipse-shaped safety region even when the overtaking task fails.


\begin{table}[!ht]
    \centering
    \arrayrulecolor{black}
    \caption{\color{black} Overtaking  with Different   Accelerating Rates of $veh_f$.}
    \begin{tabular}{|>{\color{black}}c|>{\color{black}}c|>{\color{black}}c|>{\color{black}}c|}
         \hline
         &  $\alpha_f^x\in(1,3]$ & $\alpha_f^x\in(3,5]$ & $\alpha_f^x\in(5,7]$  \\\hline
         successful overtaking  & 29/30 & 11/30 & 0/30\\\hline
         min dist. with $veh_f$ (m)& 1.27 & 1.24 & 1.31\\\hline
         min dist. with $veh_o$ (m)& 18.31 & 5.52 & 72.41 \\\hline
    \end{tabular}
    \label{tab.different_throttles}
\end{table}

\medskip
In summary, our simulation results in this subsection show the effectiveness of our proposed method under different driving scenarios. 
Specifically, it evaluate the  safe-by-construction property of our framework by assuming that the behaviors of other vehicles are rational \cite{shalev-shwartzFormalModelSafe2017}. 
Therefore, our result  extend  existing safety-critical overtaking maneuvers  \cite{heRuleBasedSafetyCriticalControl2021, heAutonomousRacingMultiple2022} to the general case with the potential opposing traffic on two-way road scenarios. We will show the comparison with the baseline approach in the next section.

\subsection{Comparison with Conventional MPC}

In this subsection, we conduct simulations to further demonstrate the superiority of our approach. 
As discussed in Section~\ref{sec.Related-works-summary},  existing overtaking maneuvers with safety guarantees cannot be directly extended to the case with opposing vehicles.  However, if formal safety guarantee is not required, then the conventional MPC-based method  can be extended to the case with opposing vehicles by further considering the ellipse-shaped distance around the opposing vehicle as new distance constraints, known as MPC distances constraints (MPC-DC) \cite{zengSafetycriticalModelPredictive2021}.  
Therefore, we compare the proposed dual TO-CBF-MPC approach with the standard MPC-DC for some extreme scenarios where the MPC-DC method may not be safe.

Specifically, we consider two simulation  scenarios (A and B), where  the velocities of  $veh_f$ and $veh_o$ are constant during the simulation, i.e., $veh_o$ is assumed to be unknown human driver with constant velocity.  
The velocities of  $veh_f$ are the same in two scenarios, but the velocity of  $veh_f$ in Scenario A is fast than that in Scenario B. 
Therefore,  feasible overtaking actions in Scenario~B may be dangerous in Scenario~A.

For each scenario, as we mentioned above, we use both our TO-CBF-MPC method and the MPC-DC method, 
and their  parameters   are as follows:
\begin{enumerate}[leftmargin=*] 
    \item 
    For the dual TO-CBF-MPC method,
    we still consider the setting in Case~2 as above with $N = 50$,   
    and we investigate two different  VL-CBF levels   $\epsilon = \tilde{\kappa}^{-1}(-\lambda_0) = 0.3$ and $\epsilon =0.5$ respectively;
    \item 
    For the  MPC-DC method, we also set the prediction horizon  as  $N_{\text{MPC}} = 50$, and the distance constraints are set as the ellipse-shaped constraints as in VL-CBF.
\end{enumerate} 

 For each scenario, we also consider the impact of perception errors. Therefore, we repeat the simulation for each scenario by two different methods for 30 times  with perception noises randomly generated by the simulation setups provided in Section~\ref{sec.sim_setups}.

To compare the two approaches, we show the minimum distances with other vehicles as well as the number of safe simulation cases among the 30 cases for each method (one MPC-DC controller and two VL-CBF controllers with different parameters) in Table~\ref{tab.numerical_table}. 
The simulation results show that, 
for  VL-CBF controllers with different levels, our framework can always  guarantee safety under disturbances for all simulations generated. Furthermore, 
for both parameters, the safety margins w.r.t.\   other vehicles are more significant, which means that our method is more robust than the MPC-DC method.

\begin{table}[ht]
    \centering
    \arrayrulecolor{black}
    \caption{\color{black}Minimum distance with surrounding vehicles}
    \begin{tabular}{|>{\color{black}}c|>{\color{black}}c|>{\color{black}}c|>{\color{black}}c|}
        \hline
         (Scenario~A)& MPC-DC & $\epsilon = 0.3$ & $\epsilon = 0.5$ \\\hline
         min dist. with $veh_f$ (m)& -0.15 & 0.76 & 0.99\\\hline
         min dist. with $veh_o$ (m)& 0.31 & 85.66 & 86.07\\\hline
         safe  cases & 0/30 & 30/30 & 30/30\\\hline
         \hline
         (Scenario~B)& MPC-DC & $\epsilon = 0.3$ & $\epsilon = 0.5$ \\\hline
         min dist. with $veh_f$ (m)& 0.37 & 0.77 & 1.02 \\\hline
         min dist. with $veh_o$ (m)& 36.46 & 42.51 & 41.55 \\\hline
          safe  cases & 19/30 & 30/30 & 30/30 \\\hline
    \end{tabular}
    \label{tab.numerical_table}
\end{table}

\begin{figure}
	\centering
	\subfigure[\footnotesize Simulations for a Scenario A case, where our method will safely terminate the overtaking process when it is not feasible.]{\includegraphics[width = \linewidth]{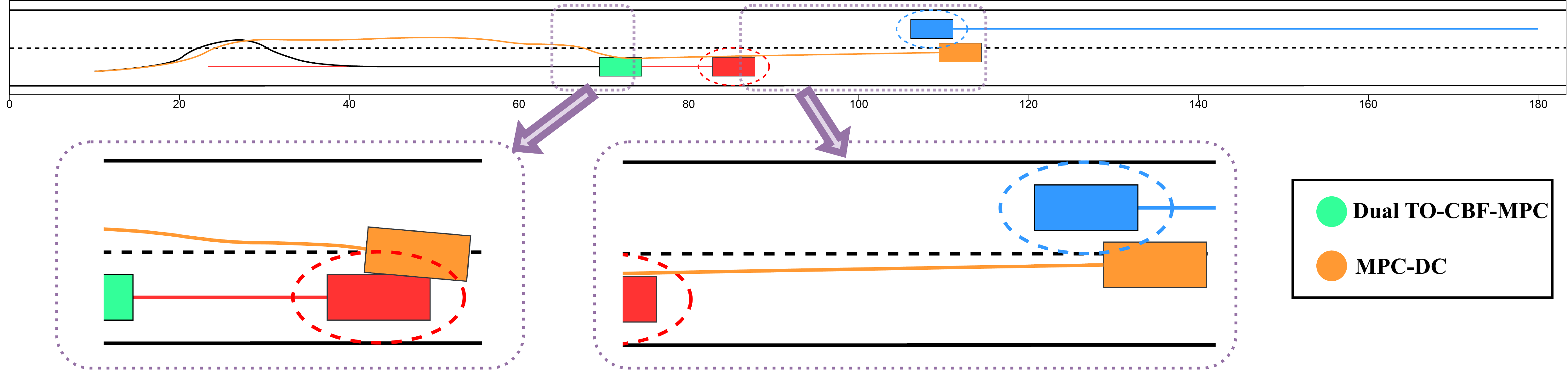} \label{fig.sims4_3}}
	\subfigure[\footnotesize Simulations for a  Scenario~B case, where our method can safely overtake the front vehicle but the existing method fails to do so.]{\includegraphics[width = \linewidth]{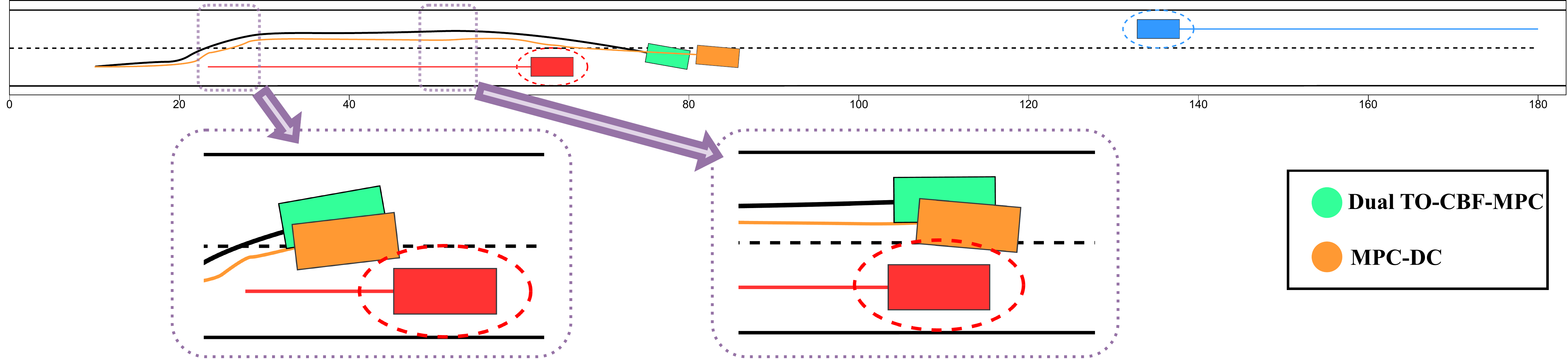} \label{fig.sims4_2}} 
	\vspace{-12pt}
	\caption{Simulations of the proposed dual TO-CBF-MPC framework compared with the  MPC-DC method with opposing traffic.}
	\label{fig.visualized_results}
\end{figure}


In Figure~\ref{fig.visualized_results}, we provide visualized simulation results for a   Scenario~A  case (Figure~\ref{fig.sims4_3}) and  a  Scenario~B  case (Figure~\ref{fig.sims4_3}). 
For Scenario~A in Figure~\ref{fig.sims4_3}, since the opposing vehicle is moving with a fast speed,  the overtaking task is not feasible for both methods. Since the standard MPC-DC does not have a backup strategy, it will collide with the opposing vehicle if it chooses to overtake.  Whereas, our proposed dual TO-CBF-MPC method can automatically choose to terminate the overtaking process, and guide the ego vehicle back to $\mathcal{L}_{\text{ego}}$ safely. For Scenario~B in Figure~\ref{fig.sims4_2}, although there is enough space for the ego vehicle to conduct safe overtaking, and the optimization problem for  MPC-DC  is always feasible, it is still not safe by violating the ellipse-shaped distance constraints at some point due to the inter-sampling effect and the existence of disturbances during the movements. In contrast, the trajectory planned by our proposed dual TO-CBF-MPC method has wider safety margins, and safety is guaranteed even between sampling intervals.


\section{Conclusions}
\label{sec.conclusion}

In this paper, we  introduced a new safe-by-construction control framework to address the  general on-road overtaking tasks for autonomous vehicles with consideration of the potential opposing traffic.  The contributions of our works are summarized as follows. 
First, we proposed  the novel concept of  VL-CBF in which level sets are adjustable, and is particularly suitable for overtaking maneuvers. 
Second, we developed an integrated framework  combining VL-CBF with time-optimal MPC. We show that the  proposed strategy is provably safe, and therefore, it provides a  formal safety guarantee  for our integrated framework. Finally, a set of simulations were conducted to show the effectiveness of our approach for different driving scenarios. Specifically, we show that our approach is applicable for general two-way overall taking problem in the presence of uncertainties, and our approach is more robust compared with the baseline algorithms.

In this work, the safety guarantee is obtained under the assumption that the dynamic model of the vehicle is precise. When the system model is imprecise with bounded errors, our safety guarantee is not theoretically sound. In the future, we would like to further investigate the effect of unmodeled dynamics. Also, we would like to provide probabilistic safety guarantees when the distribution information of the uncertainties is available a priori.

\bibliographystyle{IEEEtran}
\bibliography{main}


\vfill

\end{document}